\documentclass[11pt,a4paper]{article}
\usepackage{indentfirst,mathrsfs}
\usepackage{amsfonts,amsmath,amssymb,amsthm}
\usepackage{latexsym,amscd}
\usepackage{amsbsy}
\usepackage{color}

\newtheorem{thm}{Theorem}
\newtheorem{lem}{Lemma}

\newtheorem{example}{Example}
\newtheorem{remark}{Remark}

\begin{document}
 \title{Infinite families of constacyclic codes supporting 3-designs and their applications in coding theory}
\author{
Hongsheng Hu\thanks{H. Hu and X. Zeng are with the Key Laboratory of Intelligent Sensing System and Security (Hubei University), Ministry of Education and the Hubei Key Laboratory of Applied Mathematics, Faculty of Mathematics and Statistics, Hubei University, Wuhan 430062, China. 
Emails: 202011104010016@stu.hubu.edu.cn, xiangyongzeng@aliyun.com},
Nian Li\thanks{N. Li is with the Key Laboratory of Intelligent Sensing System and Security (Hubei University), Ministry of Education, the Hubei Provincial Engineering Research Center of Intelligent Connected Vehicle Network Security, School of Cyber Science and Technology, Hubei University, Wuhan 430062, China. 
Email: nian.li@hubu.edu.cn},
Yanan Wu\thanks{Y. Wu is with the School of Mathematical Sciences and Key Laboratory of Pure Mathematics and Combinatorics (LPMC), Nankai University, Tianjin 300100, China. 
Email: yanan.wu@aliyun.com},
Xiangyong Zeng
}
\date{}
\maketitle

\begin{quote}
{\small {\bf Abstract:}  Constacyclic codes over finite fields are of theoretical importance as they are closely related
 to a number of areas of mathematics such as algebra, algebraic geometry, graph theory,
 combinatorial designs and number theory. However,  the study of constacyclic codes in this context remains limited compared to classical cyclic codes. This paper provides two  infinite families of $\lambda$-constacyclic codes over $\mathbb{F}_{q^2}$ that support infinite families of 3-designs, which generalize the results in [IEEE Trans. Inf. Theory 69(4): 2341-2354, 2023]. The parameters and weight distributions are determined completely. Besides, we study their  subfield subcodes and applications on constructing entanglement-assisted quantum error-correcting codes (EAQECCs) and locally recoverable codes (LRCs). It is worthy to mention that two classes of maximal entanglement EAQECCs with a negative or  a high positive net rate are derived. Moreover, two classes of distance-optimal and dimension-optimal LRCs are also obtained.
}

{\small {\bf Keywords:} Constacyclic code; subfield subcode;  $t$-design;   quantum error-correcting code;  locally recoverable code}
\end{quote}

\vspace{-1\baselineskip}

\section{Introduction}
Constacyclic codes over finite fields provide a natural and powerful generalization of cyclic codes, and hence constitute an important class of linear codes in coding theory. Classical cyclic codes are known to possess elegant algebraic structures, which often lead to efficient decoding algorithms. While $\lambda$-constacyclic codes with $\lambda \ne 1$ have a slightly weaker algebraic structure than cyclic codes, i.e $\lambda = 1$ they still offer structural advantages compared with general linear codes. In fact, their distinct algebraic properties make them special advantages compared to classical cyclic codes.  For example, $\lambda$-constacyclic MDS codes over $\mathbb{F}_q$ with  $\lambda\ne 1$ \cite{krishna1990pseudocyclic} are an interesting alternative to cyclic MDS codes in applications where specific values of $n$ and $k$ (for which a cyclic MDS code does not exist)  are required.  Moreover, as demonstrated in \cite{DDR,SHZ}, for certain choices of $q$, $n$, and $k$, the best constacyclic code over $\mathbb{F}_q$  has a
much better error-correcting capability than the best $[n,k]$ cyclic code over $\mathbb{F}_q$.  In addition, several families of linear codes with specific desirable parameters can be realized through constacyclic codes, but not through cyclic codes \cite{B,FWF,HD,JLX}.

The interplay between coding theory and combinatorial $t$-designs further enhances the importance of this subject. One of the major approaches to the construction of combinatorial $t$-designs is the employment of error-correcting codes.  The reader is referred to \cite{ding2020infinite,ding2020linear,ding2021linear,ding2022designs,tang2021infinite} for more information. Let $\alpha$ be a primitive element of $\mathbb{F}_{q^2}$ with $q=p^m$ and $\beta=\alpha^{q-1}$.  In 2020,  Ding and Tang \cite{ding2020infinite} made a breakthrough by presenting infinite families of near MDS codes  with nonzeros  $\beta$ and $\beta^{2}$ over $\mathbb{F}_{3^m}$ and $\mathbb{F}_{2^m}$, supporting infinite families of 3-designs and 2-designs, respectively. Their work resolved a long-standing open problem on the existence of infinite families of near-MDS codes supporting infinite 2-designs.   In \cite{xiang2022infinite}, Xiang, Tang and Liu proposed a family of cyclic codes with nonzeros  $\beta^3$ and $\beta^{4}$ over $\mathbb{F}_{7^m}$ supporting an
infinite family of 3-designs. Afterwards, Geng and Yang et al. \cite{conj} studied the cyclic codes over $\mathbb{F}_{3^m}$ with nonzeros  $\beta^{4}$ and $\beta^{5}$. They conjectured that this code is a  near MDS code.  In \cite{QWH}, the authors gave an affirmative answer to the conjecture proposed by \cite{conj}. It should be noted that this code also supports 3-designs even though the property did not been studied in \cite{conj,QWH}.  Very recently, Wang, Tang and Ding  investigated an infinite family of cyclic codes over $\mathbb{F}_{q}$ with nonzeros $\beta^{\frac{p^{s}-1}{2}}$ and $\beta^{\frac{p^{s}+1}{2}}$ \cite{wang2023infinite}, where $p$ is an odd prime, $m\geq 2$ is an integer and $s$ is an integer with $1\leq s<m$. They proved that these cyclic codes support 3-designs and thus, their results covered those of  \cite{ding2020infinite,conj,QWH,xiang2022infinite}.

However, despite these advances, the literature on infinite families of cyclic codes supporting 3-designs remains relatively limited. To the best of our knowledge,  only one infinite family of $\lambda$-constacyclic codes supporting 3-designs with $\lambda\neq 1$ \cite[pp. 355]{ding2022designs}, and  two infinite families of negacyclic codes over $\mathbb{F}_{q^2}$ supporting infinite families of 3-designs \cite{wang2023infinite} were proposed so far.  This scarcity of results highlights a significant gap: although constacyclic codes are theoretically important, systematic results on $\lambda$-constacyclic codes with $\lambda \ne 1$ remain rare compared to the abundant references  on cyclic codes. Therefore, one of the objectives of this paper is  to  present infinite families of $\lambda$-constacyclic codes holding an infinite family of 3-designs.  It is worthy to mention that our results generalize those of \cite{wang2023infinite}.

Beyond their intrinsic theoretical significance, constacyclic codes also have important applications in the construction on coding theory, including  locally
repairable codes \cite{CFXF,SZL} and quantum codes \cite{CLZ,ZSL}. Entanglement-assisted quantum error-correcting codes (EAQECCs), a class of important  quantum codes, was firstly introduced by Bowen \cite{Bowen2002} and later developed by Brun et al. \cite{Brun2006}. EAQECCs provide significant  advantages over classical quantum codes by exploiting pre-shared entanglement, allowing them to overcome  the quantum Singleton bound and they can be constructed from arbitrary classical codes without requiring strict self-orthogonality.  Locally Repairable Codes (LRCs) are a class of error-correcting codes that enable the recovery of each encoded symbol by accessing only a small number of other symbols. Key properties such as locality and availability make LRCs particularly advantageous in distributed storage systems. Thus, the other objective of this paper is to study the properties of our codes. In particular, we explore their subfield subcodes and obtain a class of ovoid codes as a byproduct. Besides, we consider the connections between our proposed constacyclic codes and LRCs as well as EAQECCs, and we ultimately derive distance-optimal and dimension-optimal LRCs. Moreover, two families of maximal entanglement EAQECCs with good net rate are obtained.

The remainder of this paper is organized as follows. Section 2 recalls necessary notations and basics. Section 3 studies two families of constacyclic codes over $\mathbb{F}_{q^2}$ and their duals, and presents several infinite families of 3-designs supported by these codes. Section 4 focuses on their subfield subcodes and establishes links to EAQECCs and LRCs. Finally, Section 5 concludes the paper.

\section{Definitions and  preliminaries}
 Let $q$ be a prime power. In this section, we recall some notation about constacyclic codes,  $t$-designs and a class of polynomials, which will be used in subsequent sections.

\subsection{Constacyclic codes over finite fields}
An $[n,k,d]$ code $\mathcal{C}$ over $\mathbb{F}_{q}$ is a $k$-dimensional linear subspace of $\mathbb{F}_q^{n}$ with  length $n$ and minimum Hamming distance $d$. Let $A_{i}$  be the number of codewords $c\in\mathcal{C}$ with Hamming weight $i$. Then the weight enumerator of $\mathcal{C}$ is defined by $1+A_{1}x+A_{2}x^{2}+\cdots+A_{n}x^{n}$ and  the sequence $(1,A_{1},A_{2},\ldots,A_{n})$ is called the  weight distribution of $\mathcal{C}$.
Let $\lambda\in\mathbb{F}_{q}^*$. A linear code $\mathcal{C}$ of length $n$ over $\mathbb{F}_{q}$ is called a $\lambda$-constacyclic code if $(c_{0},c_{1},\ldots,c_{n-1})\in \mathcal{C}$ implies that $(\lambda c_{n-1},c_{0},\ldots,c_{n-2})\in \mathcal{C}$. By definition, 1-constacyclic codes and -1-constacyclic codes are the classical cyclic codes and negacyclic codes, respectively.

Let $\Phi$ be the mapping from $\mathbb{F}_q^n$ to the quotient ring $\mathbb{F}_q[x]/\langle x^{n}-\lambda\rangle$ defined by
\[
\Phi((c_{0},c_{1},\ldots,c_{n-1}))=\sum_{i=0}^{n-1}c_{i}x^i.
\]
It is well known that  $\mathcal{C}\subset \mathbb{F}_q^n$ is  $\lambda$-constacyclic  if and only if $\Phi(\mathcal{C})$ is an ideal of  $\mathbb{F}_q[x]/\langle x^{n}-\lambda\rangle$. We thus can identify $\mathcal{C}$ with $\Phi(\mathcal{C})$.  Since each ideal of the ring $\mathbb{F}_q[x]/\langle x^{n}-\lambda\rangle$  is principal, every $\lambda$-constacyclic code $\mathcal{C}$ can be expressed as $\mathcal{C}=\langle g(x) \rangle$, where $g(x)$ is  monic   with the smallest degree and $g(x) \mid (x^{n}-\lambda)$. The polynomials $g(x)$ and $h(x)=(x^{n}-\lambda)/g(x)$  are called the generator polynomial and  check polynomial  of $\mathcal{C}$, respectively, and the roots of $g(x)$ and $h(x)$ are called zeros and nonzeros of $\mathcal{C}$, respectively.

The algebraic structure of constacyclic codes and their duals can be characterized through their generator and check polynomials as below.
\begin{lem}{\rm (\cite{Dinh2010,krishna1990pseudocyclic}}\label{generator and check poly}) The dual code of an $[n, k]$ $\lambda$-constacyclic code $\mathcal{C}$ generated by $g(x)$ is an $[n, n-k]$ $\lambda^{-1}$-constacyclic code  $\mathcal{C}^\perp$  generated by $\widehat{h}(x) = h_0^{-1}x^kh(x^{-1})$, where $h(x) = (x^n - \lambda)/g(x)$ is the check polynomial of $\mathcal{C}$ and $h_0$ is the coefficient of $x^0$ in $h(x)$.
\end{lem}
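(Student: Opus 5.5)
We prove the statement directly in four steps. First we show that $\mathcal{C}^\perp$ is $\lambda^{-1}$-constacyclic. Then we exhibit $k$ independent dual codewords built from the coefficients of $h$. A dimension count finishes the argument, and a final check confirms that $\widehat{h}$ is monic and divides $x^n-\lambda^{-1}$. Throughout, $\lambda\neq 0$, and $h_0\neq 0$: indeed $h(x)\mid x^n-\lambda$ and $x\nmid x^n-\lambda$, so $h(0)\neq 0$.

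\emph{Step 1: $\mathcal{C}^\perp$ is $\lambda^{-1}$-constacyclic.} Let $\tau(c_0,\dots,c_{n-1})=(\lambda c_{n-1},c_0,\dots,c_{n-2})$, and let $\sigma(a_0,\dots,a_{n-1})=(\lambda^{-1}a_{n-1},a_0,\dots,a_{n-2})$. A direct computation gives
\[
\langle \tau^{n-1}(c),\ \sigma(a)\rangle=\langle c,a\rangle\cdot \text{const}.
\]
A cleaner route is to check that $\langle \tau(c),\sigma(a)\rangle=\langle c,a\rangle$: the mismatched terms contribute $\lambda c_{n-1}\cdot\lambda^{-1}a_{n-1}=c_{n-1}a_{n-1}$, and the remaining terms are $c_{i}a_{i}$. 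Since $\tau$ is a bijection of $\mathcal{C}$ onto itself, for $a\in\mathcal{C}^\perp$ and $c\in\mathcal{C}$ we get $\langle c,\sigma(a)\rangle=\langle \tau(\tau^{-1}c),\sigma(a)\rangle=\langle\tau^{-1}c,a\rangle=0$. Hence $\sigma(a)\in\mathcal{C}^\perp$.

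\emph{Step 2: codewords of the dual built from $h$.} Write $g(x)h(x)=x^n-\lambda$, with $\deg g=n-k$ and $\deg h=k$. For $0\le j\le n-k-1$ and $0\le i\le k-1$, the product $x^ig(x)$ has degree less than $n$, so it is an honest codeword. The coefficient of $x^{k+j-i}$ in $g(x)h(x)$, for $0<k+j-i<n$, equals $0$. I will translate this into an orthogonality statement. Let $v$ be the vector of the polynomial $x^j\,x^kh(x^{-1})=\sum_l h_l x^{k-l+j}$. Then $\langle x^ig,\,v\rangle=\sum_l g_{k-l+j-i}h_l$, which is exactly the coefficient of $x^{k+j-i}$ in $gh$. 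That coefficient is $0$, because $1\le k+j-i\le n-1$. The index range needs care, and that is the only bookkeeping point. Since the $x^ig$ span $\mathcal{C}$, the $n-k$ shifts $x^j\widehat{h}(x)$, $0\le j\le n-k-1$, lie in $\mathcal{C}^\perp$. These shifts have distinct degrees, so they are linearly independent, and $\dim\mathcal{C}^\perp=n-k$. Therefore they span $\mathcal{C}^\perp$, and $\mathcal{C}^\perp=\langle\widehat{h}(x)\rangle$ in $\mathbb{F}_q[x]/\langle x^n-\lambda^{-1}\rangle$.

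\emph{Step 3: $\widehat{h}$ is the generator polynomial.} The polynomial $\widehat{h}$ is monic, because its leading coefficient is $h_0^{-1}h_0=1$, and $\deg\widehat{h}=k=n-\dim\mathcal{C}^\perp$. It divides $x^n-\lambda^{-1}$. To see this, substitute $x^{-1}$ into $g(x)h(x)=x^n-\lambda$ and multiply by $x^n$. This gives
\[
x^{n-k}g(x^{-1})\cdot x^kh(x^{-1})=1-\lambda x^n=-\lambda(x^n-\lambda^{-1}).
\]
So $\widehat{h}$ is a monic divisor of $x^n-\lambda^{-1}$ whose degree equals the codimension of the dual. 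By uniqueness of the generator polynomial, $\widehat{h}$ is the generator polynomial of $\mathcal{C}^\perp$. The main obstacle is only the index verification in Step 2; everything else is routine.
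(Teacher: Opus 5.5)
The paper does not prove this lemma; it is quoted from Dinh and from Krishna--Sarwate. Your proof is correct, and it is the standard direct argument, the constacyclic analogue of the textbook proof for cyclic codes.

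The one delicate point is the index bookkeeping in Step 2, and it holds. For $0\le i\le k-1$ and $0\le j\le n-k-1$, both $x^ig$ and $x^j\widehat h$ are honest polynomials of degree less than $n$. So every nonzero term of $\sum_l g_{k+j-i-l}h_l$ appears in the inner product, and $1\le k+j-i\le n-1$ forces this coefficient of $g(x)h(x)=x^n-\lambda$ to vanish.

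The constacyclic literature usually packages the same identity differently. One proves $\mathcal{C}^\perp=\mathcal{A}(\mathcal{C})^*$, the reciprocal of the annihilator of $\mathcal{C}$ in $\mathbb{F}_q[x]/\langle x^n-\lambda\rangle$, and then observes $\mathcal{A}(\langle g\rangle)=\langle h\rangle$. That route needs no dimension count, which is why it carries over to codes over finite chain rings. Your version is more elementary and directly exhibits the basis $\{x^j\widehat h\}_{0\le j\le n-k-1}$ of $\mathcal{C}^\perp$.

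Two small cleanups:
\begin{itemize}
\item Delete the display $\langle\tau^{n-1}(c),\sigma(a)\rangle=\langle c,a\rangle\cdot\mathrm{const}$. It is false for $n\ge 3$: for $n=3$ the left side is $c_1a_2+\lambda c_2a_0+c_0a_1$. You never use it, and the identity $\langle\tau(c),\sigma(a)\rangle=\langle c,a\rangle$ that follows is the right one.
\item The last sentence of Step 2, $\mathcal{C}^\perp=\langle\widehat h\rangle$ as an ideal, tacitly relies on Step 1 for the inclusion $\langle\widehat h\rangle\subseteq\mathcal{C}^\perp$. Alternatively, the divisibility $\widehat h\mid x^n-\lambda^{-1}$ from Step 3 already shows that $\langle\widehat h\rangle$ equals the span of the $x^j\widehat h$, $0\le j\le n-k-1$. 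That makes Step 1 redundant.
\end{itemize}
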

Let $\mathbb{Z}_N$ denote the ring of integers modulo $N$ and assume that $\gcd(N, q) = 1$. Let $h$ be an integer with $0 \leq h < N$. The $q$-cyclotomic coset modulo $N$ of $h$ is defined by
$$C_h^{(q, N)} = \{h, hq, hq^2, \ldots, hq^{\ell_h-1}\}\mod N \subseteq \mathbb{Z}_N,$$
where $\ell_h$ is the smallest positive integer such that $h \equiv hq^{\ell_h} \pmod{N}$, and is the size of the $q$-cyclotomic coset $C_h^{(q, N)}$.

The following lemma provides a trace representation for constacyclic codes, which is instrumental in analyzing their weight distributions.
\begin{lem}{\rm (\cite{sun2020class})}\label{trace exp} Let $\lambda \in \mathbb{F}_{q}^*$ with $\text{ord}(\lambda) = r$. Let $n$ be a positive integer such that $\gcd(n, q) = 1$. Define $m = \text{ord}_{rn}(q)$ and let $\gamma \in \mathbb{F}_{q^m}$ be a primitive $rn$-th root of unity such that $\gamma^n = \lambda$. Let $\mathcal{C}$ be a q-ary $\lambda$-constacyclic code of length $n$. Suppose that $\mathcal{C}$ has  $s$ pairwise non-conjugate nonzeros, $\gamma^{i_1}, \ldots, \gamma^{i_s}$, which are $s$ roots of its check polynomial. Then $\mathcal{C}$ has the trace representation $\mathcal{C} = \{\mathbf{c}(a_1, a_2, \ldots, a_s) : a_j \in \mathbb{F}_{q^{m_j}}, 1 \leq j \leq s\}$, where
$$
\mathbf{c}(a_1, a_2, \ldots, a_s) = \Big(\sum\nolimits_{j=1}^s \text{Tr}_{q^{m_j}/q}(a_j\gamma^{-ti_j})\Big)_{t=0}^{n-1},$$
with $m_j = |C_{i_j}^{(q, rn)}|$, $\mathrm{Tr}_{q^a/q^b}(\cdot)$   the trace function from $\mathbb{F}_{q^a}$ to $\mathbb{F}_{q^b}$ and $\mathcal{C}_{i_j}^{(q, rn)}$   the $q$-cyclotomic coset of $i_j$ modulo $rn$.
\end{lem}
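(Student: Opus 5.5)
The plan is to identify $\mathcal{C}$ with the ideal $\Phi(\mathcal{C})$ in $\mathbb{F}_q[x]/\langle x^n-\lambda\rangle$ and write every codeword as a trace-type evaluation vector. Because $\gcd(n,q)=1$ and $\gamma^n=\lambda$ with $\gamma$ a primitive $rn$-th root of unity, the $n$ roots of $x^n-\lambda$ are exactly $\gamma^{1+rk}$ for $0\le k<n$. These roots are distinct, so $x^n-\lambda$ is separable. Its roots fall into $q$-cyclotomic orbits: the $q$-th power map permutes the roots and acts on exponents as multiplication by $q$ modulo $rn$. Hence the nonzeros of $\mathcal{C}$ split into the conjugacy classes of $\gamma^{i_1},\dots,\gamma^{i_s}$, and these have sizes $m_j=|C_{i_j}^{(q,rn)}|$.

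\textbf{Step 1: the vectors $\mathbf{c}(a_1,\dots,a_s)$ lie in $\mathcal{C}$.} First, each coordinate is a sum of traces, so it lies in $\mathbb{F}_q$. Next, I check the constacyclic shift. The exponents satisfy $i_j\equiv 1\pmod r$, so $\gamma^{-ni_j}=\lambda^{-i_j}=\lambda^{-1}$. It follows that the coordinate at $t=-1$ equals $\lambda^{-1}$ times the coordinate at $t=n-1$. Therefore the shifted vector $(\lambda c_{n-1},c_0,\dots,c_{n-2})$ equals $\mathbf{c}(a_1\gamma^{-i_1},\dots,a_s\gamma^{-i_s})$, which is again of the same form. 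So the set $\mathcal{C}'$ of all such vectors is a $\lambda$-constacyclic code.

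To show $\mathcal{C}'\subseteq\mathcal{C}$, I use the following standard criterion: $c(x)\in\langle g\rangle$ if and only if $c$ vanishes at every zero of $g$, which is valid because $x^n-\lambda$ is separable. For a root $\gamma^{u}$ of $x^n-\lambda$, I expand $c(\gamma^u)=\sum_t c_t\gamma^{ut}$ via the trace. The expansion becomes a sum of geometric sums $\sum_{t=0}^{n-1}\gamma^{t(u-i_jq^{\ell})}$. Both $u$ and $i_jq^\ell$ are $\equiv1\pmod r$, so their difference is a multiple of $r$, and $\gamma^{u-i_jq^\ell}$ is an $n$-th root of unity. Each geometric sum is therefore $n$ if $u\equiv i_jq^\ell\pmod{rn}$ and $0$ otherwise. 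Consequently $c(\gamma^u)=0$ unless $u$ lies in some $C_{i_j}^{(q,rn)}$, that is, unless $\gamma^u$ is a nonzero. This proves $\mathcal{C}'\subseteq\mathcal{C}$.

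\textbf{Step 2: dimension count.} The same computation gives, for $u=i_j$, that $c(\gamma^{i_j})=n\,a_j^{q^{\ell}}$ for the appropriate $\ell$, up to fixing conventions. Since $n\ne0$ in $\mathbb{F}_q$, the map $(a_1,\dots,a_s)\mapsto\mathbf{c}(a_1,\dots,a_s)$ is injective. Hence $|\mathcal{C}'|=q^{\sum m_j}$. On the other hand, $\deg h=\sum_j m_j$ because the nonzeros form exactly these $s$ full conjugacy classes, so $\dim\mathcal{C}=\sum_j m_j$. Equal sizes together with containment give $\mathcal{C}=\mathcal{C}'$.

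The main obstacle is the bookkeeping in the evaluation $c(\gamma^u)$, namely matching the sign convention $\gamma^{-ti_j}$ against the conjugates $\gamma^{i_jq^\ell}$. I would handle it by rewriting $\mathrm{Tr}(a\gamma^{-ti})=\sum_\ell a^{q^\ell}\gamma^{-tiq^\ell}$ and pairing each term with $\gamma^{ut}$. If the signs do not align under this pairing, I would replace $u$ by its negative or use the reciprocal-polynomial relation from Lemma~\ref{generator and check poly}, which only renames which class is called the nonzeros.
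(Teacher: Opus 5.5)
Your proof is correct and self-contained. Note that the paper gives no proof of this lemma; it simply imports it from \cite{sun2020class}, so there is no in-paper argument to compare your route against.

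The core of your argument is evaluating $c(x)=\sum_t c_t x^t$ at each root $\gamma^u$ (with $u\equiv 1 \pmod r$) of the separable polynomial $x^n-\lambda$. Because $r\mid q-1$ forces $i_jq^\ell\equiv 1\pmod r$, each $\gamma^{u-i_jq^\ell}$ is an $n$-th root of unity, so the geometric sums behave as you say. This gives both $\mathcal{C}'\subseteq\mathcal{C}$ and, at $u=i_j$, $c(\gamma^{i_j})=n a_j$. With $\dim\mathcal{C}=\deg h=\sum_j m_j$, this closes the argument. That dimension count uses the intended reading that the $\gamma^{i_j}$ represent \emph{all} conjugacy classes of roots of $h$.

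There are two small corrections.

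First, in the shift check the correct relation is $c_{n-1}=\lambda^{-1}c_{-1}$, i.e.\ $c_{-1}=\lambda c_{n-1}$, not the other way round. Consequently the shifted vector $(c_{t-1})_{t=0}^{n-1}$ equals $\mathbf{c}(a_1\gamma^{i_1},\dots,a_s\gamma^{i_s})$, not $\mathbf{c}(a_1\gamma^{-i_1},\dots,a_s\gamma^{-i_s})$. This sub-step is redundant anyway. $\mathcal{C}'$ is obviously $\mathbb{F}_q$-linear, and your evaluation criterion already places every element of it inside $\mathcal{C}$, so closure under the constacyclic shift never needs to be checked separately.

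Second, the sign bookkeeping you flag as the main obstacle is not an obstacle. Pairing $a_j^{q^\ell}\gamma^{-ti_jq^\ell}$ with $\gamma^{ut}$ gives exactly the exponent $t(u-i_jq^\ell)$ you used, so the convention $\gamma^{-ti_j}$ is the right one as stated. The fallback you describe would not have been legitimate. With $\gamma^{+ti_j}$ one gets $c_{t+n}=\lambda c_t$, which describes a $\lambda^{-1}$-constacyclic code. Moreover, $\gamma^{u+i_jq^\ell}$ is then not an $n$-th root of unity unless $r\mid 2$. So the flip changes the code rather than merely renaming nonzeros. Since the signs align, you never need it and can drop that paragraph.
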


\subsection{The $t$-design related to linear codes}

 Let $\kappa$ and $v$ be positive integers such that $1\leq \kappa\leq v$. Let $\mathcal{P}$ be a set of $v$ elements and let $\mathcal{B}$ be a set of $\kappa$-subsets of $\mathcal{P}$. The incidence structure $\mathbb{D}=(\mathcal{P},\mathcal{B})$, where the incidence relation is the set membership, is   called a $t$-$(v,\kappa,\eta)$ design, or simply $t$-design, if every $t$-subset of $\mathcal{P}$ is contained in exactly $\eta$ elements of $\mathcal{B}$. The elements of $\mathcal{P}$ are called points, and those of $\mathcal{B}$ are referred to as blocks. The set $\mathcal{B}$ is called the block set of the design. A $t$-design is said to be simple if $\mathcal{B}$ does not contain any repeated blocks. A $t$-$(v,\kappa,\eta)$ design is called a Steiner system if $t\geq 2$ and $\eta=1$, and is denoted by $S(t,\kappa,v)$.
Let $b$ denote the number of blocks in $\mathcal{B}$. The parameters of a $t$-$(n,\kappa,\eta)$ design satisfy
\begin{equation}\label{t-design}
  \binom{n}{t}\eta=\binom{\kappa}{t}b.
\end{equation}

Linear codes and $t$-designs are closely related. A design may yield many codes and a code
may give many designs.  Let $\mathcal{C}$ be an $[n,k,d]$ code over  $\mathbb{F}_q$. Let the coordinates of a codeword of $\mathcal{C}$ be indexed by $(0,1,\ldots,n-1)$ and define $\mathcal{P}(\mathcal{C})=\{0,1,\ldots,n-1\}$.
Let $\mathcal{B}_{w}(\mathcal{C})$ denote the set of the supports of all codewords with Hamming weight $w$ in $\mathcal{C}$ without repeated blocks. If the incidence structure $(\mathcal{P}(\mathcal{C}),\mathcal{B}_{w}(\mathcal{C}))$ is a $t$-$(n,w,\eta)$ design for some positive integers $t$ and $\eta$, where $1\leq w\leq n$ and $A_{w}\neq 0$, we say that $\mathcal{C}$ supports or holds a $t$-design and the supports of the codewords of weight $w$ in $\mathcal{C}$ hold or support a $t$-design  \cite{ding2022designs}.  The following theorem shows that the pair $(\mathcal{P}(\mathcal{C}),\mathcal{B}_{w}(\mathcal{C}))$ defined by a linear code is a
$t$-design under certain condition.

\begin{thm}\label{A-M}{\rm (\cite{assmus1969new})}
Let \( \mathcal{C} \) be a linear code over $\mathbb{F}_{q}$ with length \( n \) and minimum weight \( d \). Let \( \mathcal{C}^\perp \) with minimum weight \( d^\perp \) denote the dual code of \( \mathcal{C} \). Let \( t \) (\( 1 \leq t < \min\{d, d^\perp\} \)) be an integer such that there are at most \( d^\perp - t \) weights of  \( \mathcal{C} \)  in the range \(\{1, 2, \ldots, n - t\}\). Then

{\rm  (1)}  \( (P(\mathcal{C}), B_k(\mathcal{C})) \) is a simple \( t \)-design provided that \( A_k \neq 0 \) and \( d \leq k \leq \omega \), where \( \omega \) is defined to be the largest integer satisfying \( \omega  \leq n \) and
    \[
    \omega  - \left\lfloor \frac{\omega  + q - 2}{q - 1} \right\rfloor < d.
    \]

{\rm (2)}  \( (P(\mathcal{C}^\perp), B_k(\mathcal{C}^\perp)) \) is a simple \( t \)-design provided that \( A_k^\perp \neq 0 \) and \( d^\perp \leq k \leq \omega ^\perp \), where \( \omega ^\perp \) is defined to be the largest integer satisfying \( w^\perp \leq n \) and
    \[
    \omega ^\perp - \left\lfloor \frac{\omega ^\perp + q - 2}{q - 1} \right\rfloor < d^\perp.
    \]

\end{thm}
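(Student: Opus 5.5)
This is the Assmus--Mattson theorem, and I would prove it by the classical route: puncture and shorten on a fixed $t$-set $T$ of coordinates, use MacWilliams identities to show that the weight distributions of the derived codes do not depend on $T$, and then convert this independence into the design property.

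\emph{Step 1: the shortened dual code.} Fix a $t$-subset $T\subseteq\mathcal{P}(\mathcal{C})$. Let $\mathcal{C}^T$ be $\mathcal{C}$ punctured on $T$, which has length $n-t$. Since $t<d$, puncturing is injective, so $\dim\mathcal{C}^T=\dim\mathcal{C}$. Let $(\mathcal{C}^\perp)_T$ be the code obtained from the codewords of $\mathcal{C}^\perp$ that vanish on $T$, with the $T$-coordinates deleted. Then $(\mathcal{C}^T)^\perp=(\mathcal{C}^\perp)_T$. The minimum weight of $(\mathcal{C}^\perp)_T$ is at least $d^\perp-t$. In fact it is at least $d^\perp$, because deleting zero coordinates does not change weight; only the length drops to $n-t$.

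\emph{Step 2: the weight distribution of $\mathcal{C}^T$ is independent of $T$.} The nonzero weights of $\mathcal{C}^T$ are obtained from weights of $\mathcal{C}$ reduced by at most $t$, so they lie among at most $s\le d^\perp-t$ known values $w_1<\dots<w_s$ in $\{1,\dots,n-t\}$. Here I must be careful to bound the set of possible weights uniformly in $T$; this is the delicate bookkeeping step.

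Now apply the MacWilliams identities to the pair $\mathcal{C}^T,(\mathcal{C}^\perp)_T$. The dual distribution $B_j$ is known for $j=0,\dots,d^\perp-t-1$: it is $B_0=1$ and $B_j=0$ otherwise, since the minimum weight is at least $d^\perp$. This gives $d^\perp-t$ linear equations (moment identities) in the $s$ unknowns $A_{w_i}$. Their coefficient matrix is of Vandermonde/Krawtchouk type, hence of full rank. So the $A_{w_i}$ are uniquely determined and independent of $T$. By MacWilliams again, the full weight distribution of $(\mathcal{C}^\perp)_T$ is independent of $T$ as well.

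\emph{Step 3: designs in $\mathcal{C}^\perp$.} The number of codewords of weight $k$ in $\mathcal{C}^\perp$ whose support avoids $T$ is independent of $T$. By inclusion--exclusion applied to all $t'\le t$ (the same argument works for every $t'\le t$), the number of weight-$k$ codewords whose support contains $T$ is also independent of $T$.

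To pass from codewords to supports, I need that a support of weight $k\le\omega^\perp$ carries exactly the $q-1$ scalar multiples of a single codeword. Suppose two codewords $c,c'$ have the same support of size $k$ but are not proportional. For a suitable scalar $a$, the codeword $c-ac'$ vanishes on at least $\lceil k/(q-1)\rceil$ positions of the support: the ratio $c_i/c'_i$ takes one value on at least that many of the $k$ support positions by pigeonhole. Its weight is therefore at most $k-\lceil k/(q-1)\rceil=k-\lfloor (k+q-2)/(q-1)\rfloor<d^\perp$, and it is nonzero, which is a contradiction. Dividing by $q-1$ shows that the supports form a simple $t$-design, proving (2).

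\emph{Step 4: designs in $\mathcal{C}$.} Part (1) follows the same way. Step 2 gives $T$-independence for $\mathcal{C}^T$, whose weights are the weights of $\mathcal{C}$ reduced by the number of support points in $T$. Counting codewords whose support contains $T$ again uses inclusion--exclusion over the punctured distributions for subsets of $T$; one can equivalently use complements of supports. The same pigeonhole argument with $d$ handles simplicity for $k\le\omega$.

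The main obstacle is Step 2: setting up the weight enumerator system precisely so that the unknowns are exactly $\le d^\perp-t$ weights, uniformly in $T$, and checking that the MacWilliams transform yields a nonsingular system.
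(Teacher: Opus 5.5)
The paper only quotes this theorem from Assmus--Mattson without proof, so your argument has to stand on its own. It fails at Step~2, and the problem is not bookkeeping. You claim that the nonzero weights of the punctured code $\mathcal{C}^T$ lie among at most $d^\perp-t$ values, uniformly in $T$. This is false. A codeword $c$ of weight $w$ punctures to weight $w-|\mathrm{supp}(c)\cap T|$, which can be any of $w,w-1,\dots,w-t$. Codewords of weight in $\{n-t+1,\dots,n\}$, about which the hypothesis says nothing, also land in $\{1,\dots,n-t\}$.

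The paper's own code shows this. Take $\mathcal{C}=\mathcal{C}(1,q^2+q+1)$ and $t=3$. The only weight of $\mathcal{C}$ in $\{1,\dots,q^2-2\}$ is $q^2-q$, so the hypothesis holds with $d^\perp-t=1$. But the minimum-weight supports meet a $3$-set $T$ in $0,1,2$ or $3$ points; all four occur, since their complements form an $S(3,q+1,q^2+1)$. The weight-$(q^2+1)$ words puncture to weight $q^2-2$. So $\mathcal{C}^T$ has at least the five weights $q^2-q-u$ $(0\le u\le 3)$ and $q^2-2$. That is more unknowns than the $d^\perp=4$ known coefficients $B_0=1$, $B_1=B_2=B_3=0$ of $(\mathcal{C}^\perp)_T$, so your Vandermonde system cannot determine the distribution. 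A symptom is that you never use $t<d^\perp$. Pairing $\mathcal{C}^T$ with $(\mathcal{C}^\perp)_T$ is correct for the Huffman--Pless form of the theorem, where the hypothesis bounds the number of weights of $\mathcal{C}^\perp$ by $d-t$. Here the hypothesis is on $\mathcal{C}$, and you kept the old pairing.

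With the hypothesis on $\mathcal{C}$, you must shorten $\mathcal{C}$ and puncture $\mathcal{C}^\perp$. The shortened code $\mathcal{C}_T$ (words of $\mathcal{C}$ vanishing on $T$, with $T$ deleted) has its nonzero weights among the weights of $\mathcal{C}$ in $\{1,\dots,n-t\}$. That is at most $d^\perp-t$ values, independent of $T$. Since $t<d^\perp$, we have $\dim\mathcal{C}_T=\dim\mathcal{C}-t$ for every $T$, and its dual $(\mathcal{C}^\perp)^T$ has minimum weight at least $d^\perp-t\ge 1$. Now $B_0=1$, $B_1=\dots=B_{d^\perp-t-1}=0$ determine the distribution of $\mathcal{C}_T$, and MacWilliams gives that of $(\mathcal{C}^\perp)^T$.

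The roles of your Steps~3 and~4 then swap.
\begin{itemize}
\item For $\mathcal{C}$ you now know, for every $k$ and every $S$ with $|S|=t'\le t$, the number of weight-$k$ codewords whose support avoids $S$. The hypothesis passes to $t'$ because $\{n-t+1,\dots,n-t'\}$ has only $t-t'$ elements. Your inclusion--exclusion then gives part~(1), even for $k>n-t$. This is exactly the information your remark about complements in Step~4 needed, and Step~2 never supplied it.
\item For $\mathcal{C}^\perp$ you only know punctured distributions, so you need induction on $k$ starting from $d^\perp$. The weight-$(k-t)$ words of $(\mathcal{C}^\perp)^T$ come from weight-$j$ words of $\mathcal{C}^\perp$ with $k-t\le j\le k$ that meet $T$ in exactly $j-k+t$ points. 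The term $j=k$ counts the words whose support contains $T$. The terms with $j<k$ are independent of $T$ by the induction hypothesis, together with the fact that in a $t$-design the number of blocks meeting a $t$-set in exactly $u$ points is constant.
\end{itemize}
Your pigeonhole argument for simplicity is correct, once you note that $k-\lceil k/(q-1)\rceil$ is nondecreasing in $k$.
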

\subsection {The solutions of a class of  polynomials}
To determine the possible weights of codewords of our codes, we need to study the number of solutions of certain  equations over finite fields. The following result is due to Bluher.

\begin{lem}{\rm (\cite{bluher})}\label{rational roots}
Let \(m\), \(k\) be two integers and \(\gcd(m,k)=d\). Let \(q=p^{m}\), where \(p\) is a prime number. Then the polynomial \(f(x)=x^{p^{k}+1}+ax+b\) has exactly \(0\), \(1\), \(2\) or \(p^{d}+1\) rational roots in \(\mathbb{F}_{q}\) when \(a\), \(b\) run through \(\mathbb{F}_{q}^{*}\).
\end{lem}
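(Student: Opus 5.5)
We prove Lemma~\ref{rational roots} by a translation followed by an inversion, which turns the question into counting the solutions of an affine equation that is linear over a subfield.

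Write $Q=p^k$. If $f$ has no root in $\mathbb{F}_q$ there is nothing to prove, so fix a root $x_0\in\mathbb{F}_q$. Since $b\neq 0$, we have $x_0\neq 0$.

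\emph{Step 1: translate.} Substitute $x=x_0+z$. Expanding $(x_0+z)^{Q+1}=(x_0+z)(x_0^Q+z^Q)$ and using $f(x_0)=0$ gives
\[
f(x_0+z)=z^{Q+1}+x_0z^Q+(x_0^Q+a)z .
\]
So the roots of $f$ in $\mathbb{F}_q$ are in bijection with the roots $z\in\mathbb{F}_q$ of this polynomial. The root $z=0$ corresponds to $x_0$.

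\emph{Step 2: invert.} For $z\neq 0$, put $z=1/y$ and multiply by $y^{Q+1}$. The equation becomes
\[
c\,y^Q+x_0y+1=0,\qquad c:=x_0^Q+a .
\]
Here $y=0$ is never a solution, so $z\mapsto 1/z$ is a bijection between the nonzero roots $z$ and the solutions $y\in\mathbb{F}_q$ of this equation. Hence the number of roots of $f$ in $\mathbb{F}_q$ equals $1+N$, where $N$ is the number of solutions $y\in\mathbb{F}_q$.

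\emph{Step 3: count $N$.} The map $L(y)=cy^Q+x_0y$ is additive. It is also linear over $\mathbb{F}_{p^k}\cap\mathbb{F}_{p^m}=\mathbb{F}_{p^d}$. Therefore the solution set of $L(y)=-1$ is either empty or a coset of $\ker L$, so $N\in\{0,|\ker L|\}$.

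If $c=0$, the equation is $x_0y=-1$, and $N=1$.

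If $c\neq 0$, the nonzero elements of $\ker L$ are the solutions of $y^{Q-1}=-x_0/c$ in $\mathbb{F}_q^*$. Since $\mathbb{F}_q^*$ is cyclic, the equation $y^{Q-1}=u$ has either $0$ or $\gcd(Q-1,q-1)$ solutions. Moreover $\gcd(p^k-1,p^m-1)=p^d-1$. Hence $|\ker L|\in\{1,p^d\}$, and so $N\in\{0,1,p^d\}$.

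\emph{Conclusion.} The number of roots of $f$ in $\mathbb{F}_q$ is either $0$ (no root at all) or $1+N\in\{1,2,p^d+1\}$, as claimed.

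The only step needing care is Step 3. There one must identify the correct field of scalars, $\mathbb{F}_{p^d}$, and use $\gcd(p^k-1,p^m-1)=p^d-1$ to pin the kernel size to $1$ or $p^d$. Steps 1 and 2 are routine algebra.
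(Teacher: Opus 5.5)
Your proof is correct. The translation by a root $x_0$, the inversion $z=1/y$, and the reduction to the affine equation $cy^{p^k}+x_0y=-1$ are all valid. The kernel count via $\gcd(p^k-1,p^m-1)=p^d-1$ then gives $N\in\{0,1,p^d\}$, so the total number of roots is $1+N\in\{1,2,p^d+1\}$.

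The paper gives no proof of this lemma; it simply quotes Bluher. Bluher's paper proves considerably more, namely how many parameters yield each root count. The lemma, however, only asserts the set of possible values, not that each value is attained. For example, if $m\mid k$ then $f$ agrees with $x^2+ax+b$ on $\mathbb{F}_q$, so $p^d+1$ cannot occur. So your self-contained elementary argument is all that is needed here.

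Your Step~3 is the same linearized-polynomial device the paper uses in Case~II of Lemma~\ref{root}. In fact that case is your translate-and-invert trick in disguise: the known root $x=1$ is sent to $y=\infty$, which is why the degree drops.

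Your route also gives slightly more generality. It never uses $a\neq 0$. It also applies verbatim to $Ay^{p^k+1}+By^{p^k}+Cy+D$ with $A\neq 0$: translating by a root $y_0$ and inverting gives $A+(Ay_0+B)w+(Ay_0^{p^k}+C)w^{p^k}=0$. That is exactly the equation in Case~I of Lemma~\ref{root}. Handling it this way avoids normalizing to Bluher's form $x^{p^k+1}+ax+b$ with $a,b\neq 0$, a step the paper leaves implicit there.

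A minor simplification: the coset argument needs only the additivity of $L$, not its $\mathbb{F}_{p^d}$-linearity.
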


Let \(U_{q+1}\) be   the set of all $(q+1)$-th roots of unity  of $\mathbb{F}_{q^2}$. Consider the equation
\begin{equation}
bx^{p^k+1}+ax^{p^k}+a^{q}x+b^{q}=0.
\label{eq:main}
\end{equation}
We then have the following lemma.

\begin{lem}\label{root}
Let \(m\), \(k\) be two positive integers and \(q=p^{m}\), where \(p\) is any prime number. Then (\ref{eq:main}) has 0, 1, 2 or \(p^{\gcd(k,m)}+1\) solutions in \(U_{q+1}\) for any \(a,b\in\mathbb{F}_{q^{2}}\) and \((a,b)\neq(0,0)\).
\end{lem}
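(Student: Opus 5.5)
The plan is to transfer the problem from the unit circle $U_{q+1}$ to the projective line $\mathbb{F}_q\cup\{\infty\}$ with a Möbius transformation, and then to reduce to Lemma \ref{rational roots} together with a few degenerate cases. Two preliminary remarks:

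\begin{itemize}
\item For $x\in U_{q+1}$ we have $x^q=x^{-1}$. Raising (\ref{eq:main}) to the $q$-th power and multiplying by $x^{p^k+1}$ gives back (\ref{eq:main}), so the equation is ``self-conjugate'' on $U_{q+1}$.
\item Since $x^{p^k}$ on $U_{q+1}$ depends only on $k \bmod 2m$, and $\gcd(k+2m,m)=\gcd(k,m)$, we may adjust $k$ whenever convenient. Throughout, write $d=\gcd(k,m)$.
\end{itemize}

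\textbf{Transfer to $\mathbb{F}_q\cup\{\infty\}$.} Fix $\theta\in\mathbb{F}_{q^2}\setminus\mathbb{F}_q$. The map $z\mapsto (z+\theta^q)/(z+\theta)$, with $\infty\mapsto 1$, is a bijection from $\mathbb{F}_q\cup\{\infty\}$ onto $U_{q+1}$. Substitute it into (\ref{eq:main}) and clear denominators by multiplying by $(z+\theta)^{p^k+1}$. This gives
\[
G(z)=b(z+\theta^q)^{p^k+1}+a(z+\theta^q)^{p^k}(z+\theta)+a^q(z+\theta^q)(z+\theta)^{p^k}+b^q(z+\theta)^{p^k+1}.
\]
Applying the Frobenius $q$ to the coefficients swaps the first and last terms and the two middle terms, so $G\in\mathbb{F}_q[z]$. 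Using $(z+c)^{p^k}=z^{p^k}+c^{p^k}$, $G$ has the shape
\[
G(z)=Az^{p^k+1}+Bz^{p^k}+Cz+D,\qquad A,B,C,D\in\mathbb{F}_q.
\]
The number of solutions in $U_{q+1}$ equals the number of roots of $G$ in $\mathbb{F}_q$, plus one if $1$ is a solution. The point $x=1$ is a solution exactly when $A=a+a^q+b+b^q=0$.

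We must check that $G\not\equiv 0$ when $(a,b)\neq(0,0)$. Since the transformation is invertible, this amounts to (\ref{eq:main}) not vanishing identically as a function on $U_{q+1}$. If needed we use the freedom in $k \bmod 2m$ (and in $\theta$) to handle this edge case. I expect this to be a minor nuisance rather than a real obstacle.

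\textbf{Case $A\neq0$.} Substitute $z=y-B/A$. Since $(y+e)^{p^k+1}=y^{p^k+1}+ey^{p^k}+e^{p^k}y+e^{p^k+1}$, this choice kills the $y^{p^k}$ term, and dividing by $A$ gives $y^{p^k+1}+cy+e'$ with $c,e'\in\mathbb{F}_q$.

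\begin{itemize}
\item If $c,e'\neq 0$, Lemma \ref{rational roots} gives $0,1,2$ or $p^d+1$ roots directly.
\item If $e'=0$, the polynomial is $y(y^{p^k}+c)$. The map $y\mapsto y^{p^k}$ is bijective on $\mathbb{F}_q$, so $y^{p^k}=-c$ has exactly one root, and the total is $1$ or $2$ (it is $1$ when $c=0$).
\item If $c=0\neq e'$, the equation $y^{p^k+1}=-e'$ has $0$ or $\gcd(p^k+1,q-1)$ roots. A standard fact is that this gcd lies in $\{1,2,p^d+1\}$.
\end{itemize}

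\textbf{Case $A=0$.} Now $\infty$ contributes one solution, and we count the roots in $\mathbb{F}_q$ of $Bz^{p^k}+Cz+D$.

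\begin{itemize}
\item If $B=0$, then $Cz+D$ has $0$ or $1$ roots. (It has $q$ roots only if $C=D=0$, which is excluded since $G\not\equiv 0$.)
\item If $B\neq0$, the roots form either the empty set or a coset of the kernel of the $\mathbb{F}_p$-linear map $z\mapsto z^{p^k}+(C/B)z$ on $\mathbb{F}_q$. Nonzero kernel elements satisfy $z^{p^k-1}=-C/B$. Any two of them have ratio in $\mathbb{F}_{p^k}\cap\mathbb{F}_q=\mathbb{F}_{p^d}$, so the kernel has size $1$ or $p^d$.
\end{itemize}

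Adding the point $\infty$, the total in this case is $1$, $2$ or $p^d+1$.

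Combining the two cases gives the claim. The step I expect to need the most care is the bookkeeping in the degenerate subcases, namely verifying $G\not\equiv0$ and the $\gcd(p^k+1,q-1)$ fact. The core reduction to Lemma \ref{rational roots} is routine once the Möbius substitution is in place.
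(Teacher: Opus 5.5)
Your proposal is correct and follows essentially the same route as the paper: the Möbius map $\mathbb{F}_q\cup\{\infty\}\to U_{q+1}$, a split according to whether $x=1$ is a root (i.e.\ $A\neq0$ or $A=0$), Lemma \ref{rational roots} in the first case and a linearized-polynomial count in the second. In the $A\neq0$ case you are in fact more careful than the paper, which applies Lemma \ref{rational roots} directly to $Ay^{p^k+1}+By^{p^k}+Cy+D$ without shifting away the $y^{p^k}$ term or treating the zero-coefficient subcases.

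The one loose end is $(A,B,C,D)\neq(0,0,0,0)$, which the paper checks by hand. Your reformulation as ``(\ref{eq:main}) does not vanish identically on $U_{q+1}$'' is not right: for $k=m$, $a=0$ and $b^q=-b\neq0$, every point of $U_{q+1}$ is a solution (this is the allowed count $q+1=p^{\gcd(k,m)}+1$). All you actually need is that $G$ is a nonzero polynomial. This is immediate because $G(z)=F(z+\theta^q,z+\theta)$ for the nonzero binary form $F(X,Y)=bX^{p^k+1}+aX^{p^k}Y+a^qXY^{p^k}+b^qY^{p^k+1}$ under an invertible linear substitution, so no adjustment of $k$ or $\theta$ is needed.
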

\begin{proof}
 Let $\overline{\beta}$ be the conjugate of $\beta$, namely, $\overline{\beta} = \beta{q}$.
Note that \(U_{q+1}=\{1\}\cup\{\frac{y+\beta}{y+\overline{\beta}}:y\in\mathbb{F}_{q}\}\) for any \(\beta\notin\mathbb{F}_{q}\) and \(x=1\) is a solution of (\ref{eq:main}) if and only if \(a+a^{q}+b+b^{q}=0\). Therefore, we will consider (\ref{eq:main}) from the following two cases.

Case I: If \(a+a^{q}+b+b^{q}\neq 0\), then \(x=1\) is not the solution of (\ref{eq:main}). Substituting \(x\) with \(\frac{y+\beta}{y+\overline{\beta}}\), we then have
\[
b\left(\frac{y+\beta}{y+\overline{\beta}}\right)^{p^{k}+1}+a\left(\frac{y+\beta}{y+\overline{\beta}}\right)^{p^{k}}+\overline{a}\frac{y+\beta}{y+\overline{\beta}}+\overline{b}=0.
\]
Further, it can reduced to
\begin{equation}
Ay^{p^{k}+1}+By^{p^{k}}+Cy+D=0,
\label{eq:reduced}
\end{equation}
where\begin{align*}
A &= a+a^{q}+b+b^{q}, \\
B &= a\overline{\beta}+\overline{a}\beta+b\beta+\overline{b}\,\overline{\beta}, \\
C &= a\beta^{p^{k}}+\overline{a}\overline{\beta}^{p^{k}}+b\beta^{p^{k}}+\overline{b}\,\overline{\beta}^{p^{k}}, \\
D &= a\beta^{p^{k}}\overline{\beta}+\overline{a}\beta\overline{\beta}^{p^{k}}+b\beta^{p^{k}+1}+\overline{b}\,\overline{\beta}^{p^{k}+1}.
\end{align*}
Thus, it is equivalent to determine the number of solutions of (\ref{eq:reduced}) in \(\mathbb{F}_{q}\). Note that \(A\), \(B\), \(C\), \(D\in\mathbb{F}_{q}\). Therefore, in this case, (\ref{eq:reduced}) has 0, 1, 2 or \(p^{\gcd(k,m)}+1\) solutions in \(\mathbb{F}_{q}\) when \(a\), \(b\) runs over \(\mathbb{F}_{q^{2}}\) and \((a,b)\neq(0,0)\) according to Lemma \ref{rational roots}.

Case II: If \(a+a^{q}+b+b^{q}=0\), then \(x=1\) is one of the solutions of (\ref{eq:main}). Next, we need to consider the number of solutions in \(U_{q+1}\backslash\{1\}\). Similar treatment as in Case I, it is sufficient to solve the following equation
\[
By^{p^{k}}+Cy+D=0.
\]
Firstly, we claim that \(B\), \(C\) and \(D\) cannot be 0 simultaneously. Suppose \(B=C=D=0\), then we have
\[
\begin{cases}
a+a^{q}+b+b^{q} = 0, \\
a\overline{\beta}+\overline{a}\beta+b\beta+\overline{b}\,\overline{\beta} = 0, \\
a\beta^{p^{k}}+\overline{a}\overline{\beta}^{p^{k}}+b\beta^{p^{k}}+\overline{b}\,\overline{\beta}^{p^{k}} = 0, \\
a\beta^{p^{k}}\overline{\beta}+\overline{a}\beta\overline{\beta}^{p^{k}}+b\beta^{p^{k}+1}+\overline{b}\,\overline{\beta}^{p^{k}+1} = 0.
\end{cases}
\]
From the first three equations, one can obtain that \(a=-b\in\mathbb{F}_{q}\). Then by the last equality and \((a,b)\neq(0,0)\) , we have \((\overline{\beta}-\beta)^{p^{k}+1}=0\) which implies that \(\beta\in\mathbb{F}_{q}\), a contradiction. Besides, since \(\gcd(p^{k}-1,p^{m}-1)=p^{\gcd(k,m)}-1\), then the linearized polynomial \(ry^{p^{k}}+sy\) has either 1 or \(p^{\gcd(k,m)}\) zeros in \(\mathbb{F}_{q}\) for any \(r,s\in\mathbb{F}_{q}\) and \((r,s)\neq(0,0)\). It means that \(By^{p^{k}}+Cy+D=0\) has 0, 1 or \(p^{\gcd(k,m)}\) solutions. Hence, one can conclude that the number of solutions of (\ref{eq:main}) takes value from \(\{1,\,2,\,p^{\gcd(k,m)}+1\}\) in this case. This completes the proof.
\end{proof}

\begin{remark}
\label{remark:comparison} Although the solutions of equation  \eqref{eq:main} also was analyzed in \cite[ Lemma 8]{wang2023infinite}, the result was stated only for odd $q$. It should be noted that  the proof technique in \cite[ Lemma 8]{wang2023infinite} does not explicitly rely on the parity of $q$ and could potentially be extended to even characteristic. Nevertheless, we still provide an alternative method to prove the solutions of  \eqref{eq:main}, in which we remove the restriction on $q$.
\end{remark}

\begin{remark}\label{remark:conj} In  \cite{conj}, the authors presented a conjecture as follows. Let  $m$ be an  odd integer and  $q=3^m$. Then the dual code $\mathcal{C}_{(q, q+1,3,4)}^{\perp}$ of  $\mathcal{C}_{(q, q+1,3,4)}$ is an almost MDS code with parameters $[q+1,4, q-3]$.
As stated in  \cite{conj}, to solve the above conjecture, it is sufficient to determine that the number of solutions of the equation $b u^{10}+a u^{9}+a^{q} u+b^{q}=0$ in $U_{q+1}$ is no larger than 4 and there exists at least one pair of $(a, b)$ such that the above equation has exactly 4 solutions, where $a, b \in \mathbb{F}_{q^2},\;(a, b) \neq(0,0)$ and $U_{q+1}=\left\{x \in \mathbb{F}_{q^2}: x^{q+1}=1\right\}$. It is clear that this equation is a special case of  \eqref{eq:main} by putting $p=3$ and $k=2$. Lemma \ref{root} has shown that it has $0,1,2$ or $4$ roots in this case. Further, let $a=w^{\frac{q+1}{2}}$ and $b=0$, where $w$ is the primitive element of $\mathbb{F}_{q^2}$. One has immediately obtain that it has 4 solutions. Therefore, our result in Lemma \ref{root} gives an affirmative answer to the conjecture proposed by \cite{conj}.
\end{remark}

\section{Two infinite families of constacyclic codes}\label{main:result}
Throughout this section, we always assume that  \( p \) is any  prime number and \( m \) is a positive integer.  Let \( q = p^m \), \( n = q^2 + 1 \) and $\lambda \in \mathbb{F}_{q^2}^*$ with ${\rm ord}(\lambda)=r$. Let \( \beta \) be a  primitive element of \( \mathbb{F}_{q^4} \) and put \( \delta = \beta^{(q^2 - 1)/r} \). Then it can be seen that \( \delta^n = \lambda \) and \( \delta \) is a primitive \( rn \)-th root of unity of \( \mathbb{F}_{q^4}^* \).

In this section, by giving the restriction on $r$, we present two infinite families of $\lambda$-constacyclic codes of length \( n \) over \(\mathbb{F}_{q^2} \) and study their dual codes.  It shows that these codes and their duals support 3-designs under certain conditions.

%

\subsection{The first family of constacyclic codes supporting 3-designs}\label{first class}
Following the above notation, in this subsection, we always assume that $r={\rm ord}(\lambda)$ satisfies the  two conditions as below:

 (1) $r \mid (q+1)$ and

(2) $\nu_{2}(r)= \nu_{2}(q+1)$,

\noindent where $\nu_2(\cdot)$ is the 2-adic order function.

For each $i$ with $0\leq i < rn$, let $g_{i}(x)$ denote the minimal polynomial of $\delta^{i}$ over $\mathbb{F}_{q^{2}}$. Then it is easy to check that the minimal polynomial of $\delta$ and $\delta^{q^{2}+q+1}$ are
$$
g_{1}(x)=  (x-\delta)(x-\delta^{q^{2}}) \quad {\rm and} \quad
g_{q^{2}+q+1}(x)=  (x-\delta^{q^{2}+q+1})(x-\delta^{q^{4}+q^{3}+q^{2}}),
$$ respectively.
Let $\mathcal{C}(1,q^{2}+q+1)$ denote the constacyclic code of length \( n = q^2 + 1 \) over \( \mathbb{F}_{q^2} \) with the check polynomial \( g_{1}(x)g_{q^{2}+q+1}(x) \). In what follows, we will focus on the parameters of \( \mathcal{C}(1, q^2 + q + 1) \) and its dual, and study their relations with $t$-designs.

Let $U_{s(q^2+1)}$  be the set of all $s(q^2+1)$-th roots of unity in $\mathbb{F}_{q^4}^*$, where $s\mid (q^2-1)$. Define  $f(x)=x^{q+1}$ limited on $U_{r(q^{2}+1)}$,  and the preimage set of $f(x)$ on $y_0$ is defined as $f^{-1}(y_0)=\{x\in U_{r(q^2+1)}\,\mid \,f(x)=y_0\}$, where $y_0\in U_{r(q^{2}+1)}$. For any $0\leq j \leq r-1$, denote by
\begin{eqnarray}\label{T}T=\{\delta^{-i} \mid 0\leq i\leq q^{2}\} \quad{\rm and} \quad \lambda^{j}T=\{\lambda^{j}\delta^{-i} \mid 0\leq i\leq q^{2}\}.\end{eqnarray}
Then we have the following lemma, which is useful to give the main result in this subsection.

\begin{lem}\label{q+1 relation}
Let   notation be  the same as above and $y_{0} \in U_{q^{2}+1}$. Then $f(x)$ is a surjective function from $U_{r(q^{2}+1)}$ to $U_{q^{2}+1}$. Moreover, if $f(x_0)=y_0$ for some $x_0\in U_{r(q^{2}+1)}$, then
  $f^{-1}(y_{0})=\{ \lambda^j x_0\mid 0\leq j \leq r-1 \}$ and $\mid f^{-1}(y_{0})\cap \lambda^{j}T \mid = 1$ for any $0\leq j \leq r-1$.
\end{lem}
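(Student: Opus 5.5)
We work inside the cyclic group $U_{r(q^2+1)}=\langle\delta\rangle$ of order $rn$, where $n=q^2+1$, and turn each claim into a statement about exponents modulo $rn$.

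\emph{Well-definedness and surjectivity.} For $x\in U_{r(q^2+1)}$ we have $x^{(q+1)(q^2+1)}=1$, because $r\mid(q+1)$. Hence $f(x)\in U_{q^2+1}$. Writing $x=\delta^{i}$ gives $f(x)=\delta^{i(q+1)}$. The element $\delta^{r}$ generates $U_{q^2+1}$, and $\delta^{q+1}=(\delta^{r})^{(q+1)/r}$. Since $q^2+1=(q+1)(q-1)+2$, we get $\gcd(q+1,q^2+1)=\gcd(q+1,2)$.
\begin{itemize}
\item If $q$ is even, this gcd is $1$, so $\gcd((q+1)/r,\,q^2+1)=1$.
\item If $q$ is odd, the hypothesis $\nu_2(r)=\nu_2(q+1)$ makes $(q+1)/r$ odd. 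It is therefore coprime to $q^2+1$, since any common divisor divides $\gcd(q+1,q^2+1)=2$.
\end{itemize}
In both cases $\delta^{q+1}$ generates $U_{q^2+1}$, so $f$ is surjective.

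\emph{The fibre.} A surjective homomorphism from a group of order $rn$ onto a group of order $n$ has kernel of order $r$. The elements $\lambda^{j}=\delta^{jn}$ lie in the kernel, because $\lambda^{q+1}=1$ (as $r\mid q+1$). They are $r$ distinct elements since $\mathrm{ord}(\lambda)=r$, so they form the whole kernel. It follows that $f^{-1}(y_0)=x_0\ker f=\{\lambda^{j}x_0 : 0\le j\le r-1\}$.

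\emph{One point in each translate.} Since $\gcd(r,n)=1$, the subgroup $U_{r(q^2+1)}$ is the internal direct product of $\langle\lambda\rangle$ (order $r$) and $U_{q^2+1}$ (order $n$). The set $T=\{\delta^{-i}:0\le i\le q^2\}$ consists of $n$ consecutive powers of $\delta$. Because $\delta^{n}=\lambda$, the translates $\lambda^{j}T=\{\delta^{jn-i}:0\le i\le q^2\}$ for $0\le j\le r-1$ are exactly the $r$ disjoint blocks of $n$ consecutive exponents. Together they cover $\mathbb{Z}_{rn}$, giving a partition of $U_{r(q^2+1)}$ into $r$ sets of size $n$.

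Fix $j$. The fibre $f^{-1}(y_0)=\{\delta^{i_0+ln}:0\le l\le r-1\}$, where $x_0=\delta^{i_0}$, is a coset of $\langle\delta^{n}\rangle$. Its exponents are $i_0+ln$, which form a full residue system of $r$ values differing by multiples of $n$, taken modulo $rn$. Each block of $n$ consecutive residues modulo $rn$ contains exactly one number congruent to $i_0$ modulo $n$. Therefore $|f^{-1}(y_0)\cap\lambda^{j}T|=1$ for every $j$.

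The only delicate step is the coprimality used for surjectivity. It is exactly where condition (2) is needed: for odd $q$, $(q+1)/r$ must be odd.
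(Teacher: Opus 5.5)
Your proof is correct and follows essentially the paper's route: the same coprimality $\gcd((q+1)/r,\,q^2+1)=1$ gives surjectivity, the same kernel $\langle\lambda\rangle$ gives the fibre, and the same partition of $U_{r(q^2+1)}$ into the translates $\lambda^jT$ is used, with your residue-class count simply replacing the paper's ``at most one, then pigeonhole'' step. You should delete the sentence asserting $\gcd(r,n)=1$ and $U_{r(q^2+1)}=\langle\lambda\rangle\times U_{q^2+1}$, because it is false for odd $q$ (both $r$ and $n=q^2+1$ are then even, so $\gcd(r,n)=2$ and $-1=\lambda^{r/2}\in\langle\lambda\rangle\cap U_{q^2+1}$); nothing in your argument actually depends on it.
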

\begin{proof}
  Since $r \mid (q+1)$ and $\nu_{2}(r)= \nu_{2}(q+1)$, we have $\gcd(r(q^2+1),q+1)=r$. It yields that  $f(x)$ is a surjective function from $U_{r(q^{2}+1)}$ to $U_{q^{2}+1}$ and $x^{\frac{q+1}{r}}$ is a permutation polynomial on $U_{q^{2}+1}$.
   Further, one obtains that  for each  $y \in U_{q^{2}+1}$, $f(x)=y$ has $r$ roots in $U_{r(q^{2}+1)}$. Since ${\rm ord}(\lambda)=r$ and $r \mid (q+1)$, one can immediately obtain that $f(\lambda^j x_0)=f(x_0)=y_0$. Note that  $\lambda^j x_0, 0\leq j \leq r-1$ are pairwise distinct elements. Therefore,  $f^{-1}(y_{0})=\{ \lambda^j x_0\mid 0\leq j \leq r-1 \}$.

   From \eqref{T}, it can be seen that $| \lambda^j T|=q^2+1$ where $0\leq j\leq r-1$, and for any $a\in \lambda^j T$,  one has $a\in U_{r(q^2+1)}$. Assume that there exist $0\leq j_1, j_2\leq r-1$ and $0\leq i_1,i_2\leq q^2$  with $i_1\ne i_2$ and $j_1\ne j_2$ such that $\lambda^{j_1}\delta^{-i_1}=\lambda^{j_2}\delta^{-i_2}$. Then it yields $\lambda^{j_1-j_2}=\delta^{i_1-i_2}$, which indicates that $\delta^{r(i_1-i_2)}=1$ due to ${\rm ord}(\lambda)=r$. It is impossible due to ${\rm ord}(\delta)=r(q^2+1)$ and $0\leq i_1,i_2\leq q^2$. Therefore, one has $\lambda^{j_1} T\cap \lambda^{j_2} T=\emptyset$ for any $0\leq j_1, j_2\leq r-1$ and  $j_1\ne j_2$. Further, one can conclude that $\cup_{j=0}^{r-1}\lambda^{j}T=U_{r(q^2+1)}$,
    which implies   $f^{-1}(y_{0}) \subset \cup_{j=0}^{r-1}\lambda^{j}T$.
  On the other hand, suppose that there is some $0\leq j_0\leq r-1$ satisfies $\mid f^{-1}(y_{0})\cap \lambda^{j}T \mid >1$. Then there must be  $ 0\le j_{1} < j_{2}\leq r-1$ such that $\lambda^{j_{1}}x_{0}, \lambda^{j_{2}}x_{0}\in \lambda^{j_{0}}T$, i.e.,  $\lambda^{j_{1}}x_{0}=\lambda^{j_{0}}\delta^{-i_{1}}$ and $ \lambda^{j_{2}}x_{0}=\lambda^{j_{0}}\delta^{-i_{2}}$
    for some $0\leq i_{1}, i_{2}\leq q^2$ and $ i_{1}\neq i_{2}$. This gives that $x_{0}=\lambda^{j_{0}-j_{1}}\delta^{-i_{1}}=\lambda^{j_{0}-j_{2}}\delta^{-i_{2}}$, which is impossible again by  ${\rm ord}(\delta)=r(q^2+1)$. Therefore, $\mid f^{-1}(y_{0})\cap \lambda^{j}T \mid \leq1$ for any  $0\leq j_0\leq r-1$. Then the result follows from  $f^{-1}(y_{0}) \subset \cup_{j=0}^{r-1}\lambda^{j}T$ and $|f^{-1}(y_{0}) |=r$. This completes the proof.
\end{proof}
By Lemma \ref{q+1 relation}, we now present the first important conclusion of this paper.

\begin{thm}\label{theorem1} Let notation be the same as above and $q > 2$.
Then the code $\mathcal{C}(1,q^{2}+q+1)$ over $\mathbb{F}_{q^{2}}$ has parameters $[q^{2}+1,4,q^{2}-q]$ and weight enumerator
\[
1+(q^{5}-q)z^{q^{2}-q}+\frac{(q^{4}-1)(q-1)q^{3}}{2}(z^{q^{2}-1}+z^{q^{2}+1})+(q^{7}-q^{5}+q^{4}-q^{3}+q-1)z^{q^{2}}.
\]
Furthermore, the nonzero minimum weight codewords of the code $\mathcal{C}(1,q^{2}+q+1)$ support a $3$-$(q^{2}+1,q^{2}-q,(q^{2}-q-1)(q-2))$ design.
The code $\mathcal{C}(1,q^{2}+q+1)^{\perp}$ over $\mathbb{F}_{q^{2}}$ has parameters $[q^{2}+1,q^{2}-3,4]$ and the nonzero minimum weight codewords of $\mathcal{C}(1,q^{2}+q+1)^{\perp}$ support a $3$-$(q^{2}+1,4,q-2)$ design.
\end{thm}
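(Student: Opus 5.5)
We first make the code explicit through Lemma \ref{trace exp}, applied with base field $\mathbb{F}_{q^2}$ and $\gamma=\delta$. The nonzeros $\delta$ and $\delta^{q^2+q+1}$ lie in distinct $q^2$-cyclotomic cosets modulo $rn$, each of size $2$. Hence every codeword has the form
\[
c(a,b)=\Big(\mathrm{Tr}_{q^4/q^2}\big(a\delta^{-t}+b\delta^{-t(q^2+q+1)}\big)\Big)_{t=0}^{q^2}, \qquad a,b\in\mathbb{F}_{q^4}.
\]
This gives dimension $4$, once we know that $c(a,b)=0$ forces $a=b=0$; that follows from the weight analysis below. Put $x=\delta^{-t}$. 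Then $x$ runs over $T\subset U_{r(q^2+1)}$. Using $x^{q^2}\cdot x^{q^2+1}$-type reductions together with $\delta^{n}=\lambda$, a coordinate vanishes iff
\[
a x+a^{q^2}x^{q^2}+b x^{q^2+q+1}+b^{q^2}x^{q^4+q^3+q^2}=0 .
\]
Since $x^{q^2}=\lambda^{\epsilon} x^{-1}$ on $U_{r(q^2+1)}$, we multiply through by a suitable power of $x$. The expression then becomes a polynomial in $x^{q+1}$ and $x$, and we substitute $y=x^{q+1}\in U_{q^2+1}$.

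The key step is to use Lemma \ref{q+1 relation}. Each $y\in U_{q^2+1}$ has exactly one preimage in $T$, so $x\mapsto x^{q+1}$ is a bijection $T\to U_{q^2+1}$. The lemma also lets us rewrite $x$ itself as a fixed power of $y$, because $x^{(q+1)/r}$ permutes $U_{q^2+1}$ up to the $\lambda$-twist. We then aim to show that the zero-count equation becomes, in terms of $u\in U_{q^2+1}$ (the group of $(Q+1)$-th roots of unity with $Q=q^2$), an equation of the shape
\[
B u^{q+1}+A u^{q}+A^{Q}u+B^{Q}=0,
\]
where $(A,B)$ is a nonzero linear image of $(a,b)$. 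This is exactly \eqref{eq:main} over $\mathbb{F}_{Q^2}$ with $p^k=q$ and $\gcd(k,2m)=m$. Lemma \ref{root} then says the number of zeros lies in $\{0,1,2,q+1\}$. So the only possible nonzero weights are $q^2+1,\,q^2,\,q^2-1,\,q^2-q$, and $d\ge q^2-q$. \textbf{I expect this reduction to be the main obstacle}: one must track the $\lambda$-twists and confirm the exponents really collapse to $u^{q+1},u^q$. This is where the conditions $r\mid q+1$ and $\nu_2(r)=\nu_2(q+1)$ (via Lemma \ref{q+1 relation}) are essential.

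For the dual, we first show $d^\perp\ge 4$. The dual is $\lambda^{-1}$-constacyclic with zeros at the inverses of the four nonzeros, by Lemma \ref{generator and check poly}. A codeword of weight $\le 3$ gives a $4\times 3$ system in three distinct points of $T$. After the same change of variables, it becomes a Vandermonde-like system in the elements $u_i$, $u_i^{q}$, $u_i^{q+1}$, and we show it is nonsingular. Alternatively, a weight-$\le 3$ dual word would contradict the weight restrictions through the Pless power moments.

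Next we get the weight distribution. With four unknowns $A_{q^2-q},A_{q^2-1},A_{q^2},A_{q^2+1}$, we use the first four Pless power moments over $\mathbb{F}_{q^2}$, with $A_1^\perp=A_2^\perp=A_3^\perp=0$. Solving this linear system gives the stated enumerator. Positivity of $A_{q^2-q}$ then shows $d=q^2-q$. The enumerator of $\mathcal{C}^\perp$ follows from MacWilliams, and it gives $A_4^\perp>0$, hence $d^\perp=4$.

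For the designs we apply Theorem \ref{A-M} with $t=3$. Since $3<\min\{d,d^\perp\}$, and the weights in $\{1,\dots,n-3\}$ are at most $q^2-q,\,q^2-1,\,q^2$, the count is at most $3$. The count actually used is at most $d^\perp-t=1$ once we note that $q^2-1,q^2>n-3=q^2-2$. So only $q^2-q$ lies in range, and the hypothesis holds. We check that $w=q^2-q$ satisfies $w-\lfloor (w+q^2-2)/(q^2-1)\rfloor<d$, and likewise $4$ for the dual; here $q>2$ is used. Part (1) and part (2) of Theorem \ref{A-M} then give $3$-designs. Finally, $\eta$ comes from \eqref{t-design} with $b=A_w/(q^2-1)$, since supports of weight-$d$ words determine the word up to scalar in both codes. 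This yields $\eta=(q^2-q-1)(q-2)$ and $\eta^\perp=q-2$.
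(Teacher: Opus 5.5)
Your outline follows the paper's proof: trace representation, the substitution $y=x^{q+1}$ with Lemma~\ref{q+1 relation}, Lemma~\ref{root} over $\mathbb{F}_{q^4}$ with $p^k=q$, the first four Pless moments, and Assmus--Mattson. However, the two steps that carry the mathematical content are only announced, not carried out.

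\textbf{The reduction you flag as the main obstacle.} Your idea of rewriting $x$ as a fixed power of $y$ is false for $r>1$. Applied to $x=\delta^{-1}\in T$, it would require $r(q^2+1)\mid (q+1)e-1$, which is impossible since $r\mid q+1$. A point-dependent $\lambda$-twist would not help either, because it no longer gives a polynomial equation in $y$. The idea is also unnecessary. Since $r\mid q+1$, every $x\in U_{r(q^2+1)}$ satisfies $x^{q(q+1)(q^2+1)}=1$, so $x^{q^4+q^3+q^2}=x^{-q}$. Hence
\[
ax+a^{q^2}x^{q^2}+bx^{q^2+q+1}+b^{q^2}x^{q^4+q^3+q^2}=x^{-q}\big(by^{q+1}+a^{q^2}y^{q}+ay+b^{q^2}\big),\qquad y=x^{q+1},
\]
with no leftover $x$ and no twists. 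This is \eqref{eq:main} over $\mathbb{F}_{q^4}$ with $(A,B)=(a^{q^2},b)$, exactly as in the paper.

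\textbf{The bound $d^\perp\ge 4$.} Your fallback via the Pless moments is circular. The moment system you solve already assumes $A_1^\perp=A_2^\perp=A_3^\perp=0$, and if these are left as unknowns the system is underdetermined, so it cannot exclude dual weights $\le 3$.

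The nonsingularity argument must actually be made. Its essential input is again Lemma~\ref{q+1 relation}: distinct coordinates give distinct $u_t=x_t^{q+1}$. Concretely, $c\in\mathcal{C}^\perp$ if and only if $\sum_t c_t x_t^{e}=0$ for $e\in\{1,q^2,q^2+q+1,-q\}$. Scaling column $t$ by $x_t^{-q}$ turns the columns into $(1,u_t,u_t^{q},u_t^{q+1})^{T}$. These points lie on the quadric $X_0X_3=X_1X_2$. A line through three points of a quadric lies on it, and every line of this quadric has constant $X_1/X_0$ or constant $X_2/X_0$, i.e.\ constant $u$ or constant $u^q$. So no three columns are dependent. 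The paper instead normalizes one support position to $x=1$ and shows that an explicit $3\times 3$ determinant cannot vanish.

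\textbf{Acceptable deviations.} Your other departures from the paper are fine:
\begin{itemize}
\item Deducing $d^\perp\le 4$ from $A_4^\perp>0$ instead of the paper's Singleton/MDS argument.
\item Obtaining the dimension from the weight bound instead of from the degree of the check polynomial.
\end{itemize}
Note that the $\omega$-condition in Assmus--Mattson holds trivially for $w=d$. Where $q>2$ is really needed is for $A_4^\perp>0$.
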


\begin{proof}To completes the proof, we give the following claims.

{\bf Claim 1.}  The possible nonzero weights of  $\mathcal{C}(1,q^{2}+q+1)$ take values from the set
$$\{q^2+1,q^2,q^2-1,q^2-q\}.$$

 Recall that $\delta\in \mathbb{F}_{q^{4}}$ is a primitive $r(q^{2}+1)$-th root of unity and
\[
g_{1}(x)= (x-\delta)(x-\delta^{q^{2}}), \quad g_{q^{2}+q+1}(x)= (x-\delta^{q^{2}+q+1})(x-\delta^{q^{4}+q^{3}+q^{2}}),
\]
which are two distinct divisors of $x^{n}-\lambda$. Thus,
\[
\deg(g_{1}(x)g_{q^{2}+q+1}(x))=4,
\]
which induces that the dimension of the code $\mathcal{C}(1,q^{2}+q+1)$ is $4$.
We now analyze  the possible nonzero weights of the code $\mathcal{C}(1,q^{2}+q+1)$. According to Lemma \ref{trace exp}, the trace expression of $\mathcal{C}(1,q^{2}+q+1)$ is given by
\[
\mathcal{C}(1,q^{2}+q+1)=\{\mathbf{c}(a,b): a,b\in  \mathbb{F}_{q^{4}}\},
\]
where
\[
\mathbf{c}(a,b)=\left(\operatorname{Tr}_{q^{4}/q^{2}}\left(a\delta^{-i}+b\delta^{-(q^{2}+q+1)i}\right)\right)_{i=0}^{q^{2}}.
\]
Put $x=\delta^{-i}.$ Then $x \in  T\subset U_{r(q^{2}+1)}$ and
\[
\operatorname{Tr}_{q^{4}/q^{2}}\left(ax+bx^{q^{2}+q+1}\right)=ax+a^{q^{2}}x^{q^{2}}+bx^{q^{2}+q+1}+b^{q^{2}}x^{q^{4}+q^{3}+q^{2}},
\]
where $T$ is defined as \eqref{T}. For any $(a,b)\in\mathbb{F}_{q^4}^2\backslash\{(0,0)\}$, denote by
$$N(a,b)=\mid \{ x \in T \mid ax+a^{q^{2}}x^{q^{2}}+bx^{q^{2}+q+1}+b^{q^{2}}x^{q^{4}+q^{3}+q^{2}}=0\} \mid.$$
Then it can be seen from  the  definition that the  weights of  $\mathbf{c}(a,b)$ is $wt(\mathbf{c}(a,b))=n-N(a,b)$.
 Let $d$ be the minimum Hamming distance of  $\mathcal{C}(1,q^{2}+q+1)$. Then
$$d=n-\max\{N(a,b): a,b \in \mathbb{F}_{q^4}\,{\rm and}\,(a,b)\ne (0,0)\}.$$
Observe that  $q^4+q^3+q^2+q=q(q+1)(q^2+1)\equiv 0 \, ({\rm mod} \, r(q^2+1))$. Thus, we have
\begin{equation*}
  \begin{aligned}
  & ax+a^{q^{2}}x^{q^{2}}+bx^{q^{2}+q+1}+b^{q^{2}}x^{q^{4}+q^{3}+q^{2}}
\\=&  ax+a^{q^{2}}x^{q^{2}}+bx^{q^{2}+q+1}+b^{q^{2}}x^{-q}
\\=& x^{-q}(ax^{q+1}+a^{q^{2}}x^{q^{2}+q}+bx^{q^{2}+2q+1}+b^{q^{2}})
\\=& x^{-q}(ay_{0}+a^{q^{2}}y^{q}_{0}+by^{q+1}_{0}+b^{q^{2}}),
  \end{aligned}
\end{equation*}
where $y_{0}=x^{q+1}$.
When $y_{0}$ runs over $U_{q^2+1}$, the possible number of solutions of the equation
$$ay_{0}+a^{q^{2}}y^{q}_{0}+by^{q+1}_{0}+b^{q^{2}}=0$$
is $0,1,2$ or $q+1$ according to Lemma \ref{root}.
Further,  by Lemma \ref{q+1 relation}, we have
$$\mid f^{-1}(y_{0})\cap T \mid= 1,$$  where $f(x)=x^{q+1}$ limited on $U_{r(q^{2}+1)}$. Since $T\subset U_{r(q^2+1)}$, it gives that  $N(a,b)\in\{0,1,2,q+1\}$.
Therefore,  the set of possible nonzero weights of $\mathcal{C}(1,q^{2}+q+1)$  is
$$\{q^2+1,q^2,q^2-1,q^2-q\}.$$

{\bf Claim 2.}  The dimension of the code  $\mathcal{C}(1,q^{2}+q+1)^{\perp}$ is $q^2-3$ and the minimum Hamming distance of $\mathcal{C}(1,q^{2}+q+1)^{\perp}$ is 4 (resp. 5) when $q>2$.

 Let $d^{\perp}$ be the minimum Hamming distance of  $\mathcal{C}(1,q^{2}+q+1)^{\perp}$. It can be checked that $d^{\perp}\geq 2$, and $d\leq q^2-2$, $d^{\perp}\leq 5$  follows by the Singleton bound.

If $d^{\perp}=5$, then  $\mathcal{C}(1,q^{2}+q+1)^{\perp}$ is an MDS code. This means that  $\mathcal{C}(1,q^{2}+q+1)$  has parameters $[q^2+1,4,q^2-2]$. Note that $q^2-1>q^2-2$, $q^2-q\geq 2$ and the equality holds only if $q=2$.  Therefore, we can conclude that $d=q^2-2, d^{\perp}=5$ when $q=2$, which indicates that  $\mathcal{C}(1,q^{2}+q+1)$and  $\mathcal{C}(1,q^{2}+q+1)^{\perp}$ are MDS codes for this case. Otherwise, when $q>2$, we have  $d=q^2-q$ and $2\leq d^{\perp}\leq 4$. In the following, we prove that $d^{\perp}\neq 2$ and $d^{\perp}\neq 3$ for the case $q>2$.

Suppose $d^{\perp} = 2$. Then there exist $a_1 \in \mathbb{F}_{q^{2}}^{*}$ and $1 \leq i \leq q^{2}$ such that
\[
\begin{cases}
1 + a_1 \delta^{-i} = 0, \\
1 + a_1 \delta^{-i(q^{2}+q+1)} = 0.
\end{cases}
\]
Put $x = \delta^{-i }$. Then $x\in T \setminus \{1\}$, where $T$ is defined as \eqref{T}. Moreover, one has
\begin{eqnarray}\label{d=2}
\begin{vmatrix}
1 & x \\
1 & x^{q^{2}+q+1}
\end{vmatrix}
= x(x^{q^{2}+q} - 1) = 0,
\end{eqnarray}
which induces that $x^{q+1}=1$ due to $x\ne 0$.  By Lemma \ref{q+1 relation}, since  $|f^{-1}(1) \cap T| = 1$ and $x\ne 1$, one can derive that  \eqref{d=2} has no solution in  $T \setminus \{1\}$. This gives that $d^{\perp} \ne 2$.

Suppose $d^{\perp} = 3$. Then there exist  $i, j$ with $1 \leq i \neq j \leq q^{2}$, and $a_1, a_2 \in \mathbb{F}_{q^{2}}^{*}$ such that
\begin{eqnarray}\label{d=3-1}
\begin{cases}
1 + a_1 \delta^{-i} + a_2 \delta^{-j} = 0, \\
1 + a_1 \delta^{-i(q^{2}+q+1)} + a_2 \delta^{-j(q^{2}+q+1)} = 0.
\end{cases}
\end{eqnarray}
Raising the first equation of \eqref{d=3-1} to the $q^{2}$-th power gives
\begin{eqnarray}\label{d=3-2}
1 + a_1 \delta^{-iq^{2}} + a_2 \delta^{-jq^{2}} = 0.
\end{eqnarray}
Put $x = \delta^{-i}$, $y = \delta^{-j}$. Then $x \neq y \in T \setminus \{1\}$,  and \eqref{d=3-1} and  \eqref{d=3-2} yield that
\begin{eqnarray}\label{d=3-3}
&&\begin{vmatrix}
1 & x & y \\
1 & x^{q^{2}+q+1} & y^{q^{2}+q+1} \\
1 & x^{q^{2}} & y^{q^{2}}
\end{vmatrix}\nonumber \\
= &&xy \left((x^{(q+1)q} - 1)(y^{(q+1)(q-1)} - 1) - (x^{(q+1)(q-1)} - 1)(y^{(q+1)q} - 1) \right) \nonumber \\=&& 0.
\end{eqnarray}
Let $x_0 = x^{q+1}$, $y_0 = y^{q+1}$. By Lemma \ref{q+1 relation}, one has $x_0 \neq y_0\in U_{q^{2}+1} \setminus \{1\}$. Therefore, \eqref{d=3-3} reduces to
\begin{eqnarray}\label{d=3-4}
(x_0^q - 1)(y_0^{q-1} - 1) = (x_0^{q-1} - 1)(y_0^q - 1).
\end{eqnarray}
Note that $y_0^{q-1}\ne 1$ and $x_0^{q-1}\ne 1$. Without loss of generality, assume that $y_0^{q-1}=1$. Then  $y_0=-1$ due to $\gcd(q-1,q^2+1)=2$ and $y_0\neq 1$. It follows from \eqref{d=3-4} that $x_0=-1$ again by $\gcd(q-1,q^2+1)=2$ and $x_0\neq 1$. This contradicts to $x_0\ne y_0$. Raising both sides of \eqref{d=3-4} to the $q^{2}$-th power yields that
\[
\left( \frac{1}{x_0^q} - 1 \right)\left( \frac{1}{y_0^{q-1}} - 1 \right) = \left( \frac{1}{x_0^{q-1}} - 1 \right)\left( \frac{1}{y_0^q} - 1 \right),
\]
which is equivalent to
\begin{equation*}
\frac{y_0}{x_0}(x_0^q - 1)(y_0^{q-1} - 1) = (x_0^{q-1} - 1)(y_0^q - 1).
\end{equation*}
The above equation together with \eqref{d=3-4}  gives that $\frac{y_0}{x_0}=1$, contradicts to $x_0\ne y_0$.
Hence, $d^{\perp} \neq 3$,  and  $d^{\perp} = 4$. This completes the proof of Claim 2.

Next we completes the proof with the help of {\bf Claim 1} and {\bf Claim 2}.
{\bf Claim 1} shows that the possible nonzero weights of the codewords in $\mathcal{C}(1,q^{2}+q+1)$ are $q^{2}+1$, $q^{2}$, $q^{2}-1$ and $q^{2}-q$. Denote $w_{0}=q^{2}+1$, $w_{1}=q^{2}$, $w_{2}=q^{2}-1$ and $w_{3}=q^{2}-q$. Let $A_{w_{i}}$ denote the number of the codewords with weight $w_{i}$ in $\mathcal{C}(1,q^{2}+q+1)$, where $0\leq i\leq 3$. Since  the minimum Hamming distance of $\mathcal{C}(1,q^{2}+q+1)^{\perp}$ is $4$ by {\bf Claim 2}, we then have
\[
\begin{cases}
\sum_{i=0}^{3}A_{w_{i}}=q^{8}-1,\\
\sum_{i=0}^{3}w_{i}A_{w_{i}}=q^{6}(q^{4}-1),\\
\sum_{i=0}^{3}w_{i}^{2}A_{w_{i}}=q^{8}(q^{4}-1),\\
\sum_{i=0}^{3}w_{i}^{3}A_{w_{i}}=q^{16}+2q^{14}-2q^{12}-2q^{10}+q^{4}.
\end{cases}
\]
based on the first four pless power moments \cite{huffman2003fundamentals}.  Solving this system of equations, we obtain
\[
\begin{cases}
A_{w_{3}}=q^{5}-q,\\
A_{w_{0}}=A_{w_{2}}=(q^{4}-1)(q-1)q^{3}/2,\\
A_{w_{1}}=q^{7}-q^{5}+q^{4}-q^{3}+q-1.
\end{cases}
\]
Moreover,
it follows from the Assmus-Mattson Theorem that the minimum weight codewords in $\mathcal{C}(1,q^{2}+q+1)$ support a $3-(q^{2}+1,q^{2}-q,\eta)$ design. Note that the number of the supports of the minimum weight codewords in $\mathcal{C}(1,q^{2}+q+1)$ is
\[
b=\frac{A_{q^{2}-q}}{q^{2}-1}=\frac{q^{5}-q}{q^{2}-1}=q^{3}+q.
\]
Then from \eqref{t-design} we deduce
\[
\eta=(q^{2}-q-1)(q-2).
\]
On the other hand, with the fifth Pless power moment, one can obtain the number of codewords with weight $4$ in $\mathcal{C}(1,q^{2}+q+1)^{\perp}$, which is given by
\[
A_{4}^{\perp}=\frac{q^{2}(q-2)(q^{2}+1)(q^{2}-1)^{2}}{24}.
\]
Again by the Assmus-Mattson Theorem,  the codewords of weight $4$ in $\mathcal{C}(1,q^{2}+q+1)^{\perp}$ support a $3$-$(q^{2}+1,4,q-2)$ design.

This completes the proof.
\end{proof}

\begin{remark}
In \cite[Theorem 22]{wang2023infinite}, an infinite family of negacyclic codes over $\mathbb{F}_{q^2}$ with parameters $[q^2+1, 4, q^2 - q]$ was constructed under the condition $q \equiv 1 \pmod{4}$, which supports a $3$-$(q^2+1, q^2 - q, (q^2 - q - 1)(q - 2))$ design. Our Theorem  \ref{theorem1} presents a more general family of $\lambda$-constacyclic codes over $\mathbb{F}_{q^2}$ with the same parameters and combinatorial properties, but without any restriction on $q$. Specifically, we require only that $\mathrm{ord}(\lambda) = r$ satisfies $r \mid (q+1)$ and $\nu_2(r) = \nu_2(q+1)$, which includes the case $r = 2$ (i.e., negacyclic codes) when $q \equiv 1 \pmod{4}$. Therefore, our result totally extends that of \cite[Theorem 22]{wang2023infinite}.
\end{remark}

\begin{example}
  Let $q=32$, $r=11$. Then $\mathcal{C}(1,1057)$ has parameters $[1025,4,992]$ and weight enumerator
\[
1+33554400x^{992}+532575436800x^{1023}+34327199775x^{1024}+532575436800x^{1025}.
\]
The dual code $\mathcal{C}(1,1057)^{\perp}$ has parameters $[1025,1021,4]$. The codewords of weight $992$ in $\mathcal{C}(1,1057)$ support a $3$-$(1025,992,29730)$ design, and the codewords of weight $4$ in $\mathcal{C}(1,1057)^{\perp}$ support a $3$-$(1025,4,30)$ design, which is consistent with Theorem \ref{theorem1}.
\end{example}

\begin{example}
  Let $q=29$, $r=6$. Then $\mathcal{C}(1,871)$ has parameters $[842,4,812]$ and weight enumerator
\[
1+20511120x^{812}+241497926880x^{840}+17230048080x^{841}+241497926880x^{842}.
\]
The dual code $\mathcal{C}(1,871)^{\perp}$ has parameters $[842,838,4]$. The codewords of weight $812$ in $\mathcal{C}(1,871)$ support a $3$-$(842,812,21897)$ design, and the codewords of weight $4$ in $\mathcal{C}(1,871)^{\perp}$ support a $3$-$(842,4,27)$ design. It is consistent with Theorem \ref{theorem1}.
\end{example}

\begin{example}
    Let $q=29$, $r=10$. Then $\mathcal{C}(1,871)$ has parameters $[842,4,812]$ and weight enumerator
\[
1+20511120x^{812}+241497926880x^{840}+17230048080x^{841}+241497926880x^{842}.
\]
The dual code $\mathcal{C}(1,871)^{\perp}$ has parameters $[842,838,4]$. The codewords of weight $812$ in $\mathcal{C}(1,871)$ support a $3$-$(842,812,21897)$ design, and the codewords of weight $4$ in $\mathcal{C}(1,871)^{\perp}$ support a $3$-$(842,4,27)$ design, which is also consistent with Theorem \ref{theorem1}.
\end{example}

\subsection{The Second Family of Constacyclic Codes Supporting 3-Designs}

Let   $r={\rm ord}(\lambda)$ satisfy the  following two conditions:

(1) $r \mid (q-1)$ and

(2) $\nu_{2}(r)= \nu_{2}(q-1)$.

\noindent Similarly as Subsection \ref{first class},  it is easy to check that the minimal polynomial of $\delta$ and $\delta^{q^{2}-q+1}$ are
$$
g_{1}(x)=  (x-\delta)(x-\delta^{q^{2}}) \quad {\rm and} \quad
g_{q^{2}-q+1}(x)=  (x-\delta^{q^{2}-q+1})(x-\delta^{q^{4}-q^{3}+q^{2}}),
$$ respectively.
Let $\mathcal{C}(1,q^{2}-q+1)$ denote the constacyclic code of length \( n = q^2 + 1 \) over \( \mathbb{F}_{q^2} \) with check polynomial \( g_{1}(x)g_{q^{2}-q+1}(x) \). In this subsection,  we study the parameters of the code \( \mathcal{C}(1, q^2 - q + 1) \) and its dual.

Define  $h(x)=x^{q-1}$ restricted on $U_{r(q^{2}+1)}$,  and the preimage set of $h(x)$ on $y_0$ is defined as $h^{-1}(y_0)=\{x\in U_{r(q^2+1)}\,\mid \,h(x)=y_0\}$, where $y_0\in U_{r(q^{2}+1)}$.
Then we have the following  result.

\begin{lem}\label{q-1 relation}
Let   notation be  the same as above, $y_{0} \in U_{q^{2}+1}$ and $T$ is defined as \eqref{T}. Then $h(x)$ is a surjective function from $U_{r(q^{2}+1)}$ to $U_{q^{2}+1}$. Moreover, if $h(x_0)=y_0$ for some $x_0\in U_{r(q^{2}+1)}$, then
  $h^{-1}(y_{0})=\{ \lambda^j x_0\mid 0\leq j \leq r-1 \}$ and $\mid h^{-1}(y_{0})\cap \lambda^{j}T \mid = 1$ for any $0\leq j \leq r-1$.
\end{lem}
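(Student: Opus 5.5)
We mirror the proof of Lemma \ref{q+1 relation}, replacing $q+1$ by $q-1$ throughout. The first step is the gcd computation. Since $r\mid(q-1)$ and $\nu_2(r)=\nu_2(q-1)$, the integer $(q-1)/r$ is odd. We also have $\gcd(q^2+1,q-1)=\gcd(2,q-1)$, which equals $2$ for odd $q$ and $1$ for even $q$. Hence $\gcd(r(q^2+1),q-1)=r\cdot\gcd\bigl(q^2+1,(q-1)/r\bigr)=r$, because $(q-1)/r$ is odd and any odd divisor of $q-1$ is coprime to $q^2+1$. (When $q$ is even, $r$ is odd and this is immediate.)

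For surjectivity, $U_{r(q^2+1)}$ is cyclic of order $r(q^2+1)$. The map $x\mapsto x^{q-1}$ therefore has kernel of size $\gcd(r(q^2+1),q-1)=r$, and its image is the unique subgroup of order $q^2+1$, namely $U_{q^2+1}$. So $h$ is surjective onto $U_{q^2+1}$, and every fibre has exactly $r$ elements.

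Next we identify the fibre. Since $\lambda\in\mathbb{F}_{q^2}^*$ has order $r\mid(q-1)$, we get $\lambda^{q-1}=1$, so $h(\lambda^jx_0)=h(x_0)=y_0$. The elements $\lambda^jx_0$ for $0\le j\le r-1$ are pairwise distinct and lie in $U_{r(q^2+1)}$, because $\lambda\in U_r\subset U_{r(q^2+1)}$. As the fibre has size $r$, it equals $\{\lambda^jx_0\}$.

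The intersection statement is the same argument as in Lemma \ref{q+1 relation}, since it depends only on $\delta$, $\lambda$ and $T$, not on the map. The sets $\lambda^jT$ are pairwise disjoint: if $\lambda^{j_1}\delta^{-i_1}=\lambda^{j_2}\delta^{-i_2}$, raise both sides to the $r$-th power to get $\delta^{r(i_1-i_2)}=1$. Because ${\rm ord}(\delta)=r(q^2+1)$ and $|i_1-i_2|\le q^2$, this forces $i_1=i_2$ and then $j_1=j_2$. Each $\lambda^jT$ has $q^2+1$ elements, so counting gives $\bigcup_{j}\lambda^jT=U_{r(q^2+1)}$.

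Now suppose two fibre elements $\lambda^{j_1}x_0,\lambda^{j_2}x_0$ with $j_1\ne j_2$ both lay in the same $\lambda^{j_0}T$. Writing them as $\lambda^{j_0}\delta^{-i_1}$ and $\lambda^{j_0}\delta^{-i_2}$ gives $\lambda^{j_2-j_1}=\delta^{i_1-i_2}$. The same order argument then forces $i_1=i_2$, and hence $j_1=j_2$, a contradiction. So each $\lambda^jT$ meets the fibre in at most one point. The $r$ fibre elements are distributed over the $r$ disjoint sets $\lambda^jT$ with at most one in each, so each intersection has size exactly $1$.

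The only step needing care is the gcd computation. There the parity condition $\nu_2(r)=\nu_2(q-1)$ is exactly what removes the factor $2$ shared by $q-1$ and $q^2+1$ when $q$ is odd. Everything else is verbatim from Lemma \ref{q+1 relation}.
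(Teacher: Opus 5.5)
Your proposal is correct and follows essentially the same route as the paper, which omits this proof and defers to the proof of Lemma~\ref{q+1 relation}. That route is: the gcd computation $\gcd(r(q^2+1),q-1)=r$ gives surjectivity and fibres of size $r$; the fibre is identified as $\{\lambda^jx_0\}$; and the disjoint partition $U_{r(q^2+1)}=\bigcup_j\lambda^jT$ together with a pigeonhole count yields the intersection property. You also supply the parity argument behind the gcd that the paper leaves implicit, and your disjointness step (forcing $i_1=i_2$, then $j_1=j_2$) is cleaner than the paper's blanket assumption $i_1\ne i_2$, $j_1\ne j_2$.
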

\begin{proof}
  The proof is similar to that of Lemma \ref{q+1 relation}, and we thus omit it here.
\end{proof}
We now present the second important result of this paper.

\begin{thm}\label{theorem2} Let notation be the same as above and $q > 2$.
Then the code $\mathcal{C}(1,q^{2}-q+1)$ over $\mathbb{F}_{q^{2}}$ has parameters $[q^{2}+1,4,q^{2}-q]$ and weight enumerator
\[
1+(q^{5}-q)z^{q^{2}-q}+\frac{(q^{4}-1)(q-1)q^{3}}{2}(z^{q^{2}-1}+z^{q^{2}+1})+(q^{7}-q^{5}+q^{4}-q^{3}+q-1)z^{q^{2}}.
\]
Furthermore, the minimum weight codewords of the code $\mathcal{C}(1,q^{2}-q+1)$ support a $3$-$(q^{2}+1,q^{2}-q,(q^{2}-q-1)(q-2))$ design.
The code $\mathcal{C}(1,q^{2}-q+1)^{\perp}$ over $\mathbb{F}_{q^{2}}$ has parameters $[q^{2}+1,q^{2}-3,4]$ and the minimum weight codewords of $\mathcal{C}(1,q^{2}-q+1)^{\perp}$ support a $3$-$(q^{2}+1,4,q-2)$ design.
\end{thm}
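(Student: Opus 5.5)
The proof will mirror that of Theorem \ref{theorem1}, with the map $f(x)=x^{q+1}$ replaced by $h(x)=x^{q-1}$ and Lemma \ref{q+1 relation} replaced by Lemma \ref{q-1 relation}.

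First, $g_1$ and $g_{q^2-q+1}$ are distinct quadratic divisors of $x^n-\lambda$, so $\dim \mathcal{C}(1,q^2-q+1)=4$. By Lemma \ref{trace exp}, the codewords are
\[
\mathbf{c}(a,b)=\bigl(\mathrm{Tr}_{q^4/q^2}(ax+bx^{q^2-q+1})\bigr)_{x\in T}, \qquad a,b\in\mathbb{F}_{q^4}.
\]
Since $x\in U_{r(q^2+1)}$ and $q^4-q^3+q^2-q=q(q-1)(q^2+1)$ is divisible by $r(q^2+1)$ (because $r\mid q-1$), we have $x^{q^4-q^3+q^2}=x^{q}$. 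Hence
\[
ax+a^{q^2}x^{q^2}+bx^{q^2-q+1}+b^{q^2}x^{q}
= x^{q}\bigl(ax^{1-q}+a^{q^2}x^{q^2-q}+bx^{q^2-2q+1}+b^{q^2}\bigr).
\]
Set $y_0=x^{q-1}\in U_{q^2+1}$ and $u=y_0^{-1}$. Then $x^{1-q}=u$, $x^{q^2-q}=y_0^{q}=u^{-q}$ and $x^{q^2-2q+1}=y_0^{q-1}=u^{1-q}$. Multiplying through by $u^{q}$ turns the bracket into $au^{q+1}+bu+a^{q^2}+\ldots$; a cleaner choice is to multiply by $y_0^{-1}$ or rescale by a suitable power of $x$. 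I expect one can always reach the form $Bu^{q+1}+Au^{q}+A^{q^2}u+B^{q^2}=0$ with $u\in U_{q^2+1}$ and $(A,B)$ determined by $(a,b)$, which is equation \eqref{eq:main} over $\mathbb{F}_{q^4}$ with $p^k=q$. Finding this normalization is the step I expect to require the most care. If the direct form fails, I would use $u=x^{-(q-1)}$ and the identity $x^{q^2}=x^{-1}\cdot x^{q^2+1}$, where $x^{q^2+1}\in U_r\subset\mathbb{F}_q^*$ is a constant on each coset $\lambda^jT$, to absorb the leftover factor into the coefficients. Lemma \ref{root} (with base field $\mathbb{F}_{q^2}$ and $\gcd(m,2m)=m$) then gives $0,1,2$ or $q+1$ roots $y_0$. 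By Lemma \ref{q-1 relation}, each $y_0$ has exactly one preimage in $T$, so $N(a,b)\in\{0,1,2,q+1\}$ and the possible weights are $q^2+1,q^2,q^2-1,q^2-q$.

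Second, for the dual: the Singleton bound gives $d^\perp\le5$, and $d^\perp=5$ would force $d=q^2-2$, which is impossible for $q>2$. To exclude $d^\perp=2$, the $2\times 2$ determinant gives $x^{q^2-q}=1$, i.e.\ $h(x)^q=1$, hence $h(x)=1$; by Lemma \ref{q-1 relation}, $x=1$, a contradiction. To exclude $d^\perp=3$, expand the $3\times 3$ determinant with rows $(1,x,y)$, $(1,x^{q^2-q+1},y^{q^2-q+1})$, $(1,x^{q^2},y^{q^2})$ in terms of $x_0=x^{q-1}$ and $y_0=y^{q-1}$. The identity analogous to \eqref{d=3-4} should again follow, and raising it to the $q^2$-th power (using $z^{q^2}=z^{-1}$ on $U_{q^2+1}$) yields $x_0=y_0$, a contradiction. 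Here $\gcd(q+1,q^2+1)\le 2$ plays the role of $\gcd(q-1,q^2+1)$ in handling the degenerate cases. Thus $d^\perp=4$, and then $d=q^2-q$, since that weight must occur (otherwise the code would have fewer weights, contradicting the moment computation).

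Finally, the weight enumerator follows exactly as before by solving the first four Pless power moments, since $d^\perp=4$. The moment equations are identical to those of Theorem \ref{theorem1}, so the solution is the same. The Assmus--Mattson Theorem \ref{A-M} applies with $t=3$, since there are at most $d^\perp-t=1$ weights in the range $\{1,\ldots,n-3\}$: only $q^2-q$ (for $q>2$, we have $q^2-1>n-3$). The block counts then follow from \eqref{t-design}: $b=(q^5-q)/(q^2-1)$ gives $\eta=(q^2-q-1)(q-2)$, and the fifth moment gives $A_4^\perp$ and $\eta=q-2$.
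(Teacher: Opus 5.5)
Your proposal is correct and follows the same route as the paper, whose proof of this theorem simply transfers the argument of Theorem~\ref{theorem1} through Lemma~\ref{q-1 relation}. Your $d^{\perp}=3$ sketch also checks out: the determinant reduces to $(x_0^{q}-1)(y_0^{q+1}-1)=(x_0^{q+1}-1)(y_0^{q}-1)$, and your degenerate-case and $q^2$-power arguments then force $x_0=y_0$.

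The normalization you left open is already contained in your own computation. Indeed, $u^{q}\bigl(au+a^{q^2}u^{-q}+bu^{1-q}+b^{q^2}\bigr)=au^{q+1}+b^{q^2}u^{q}+bu+a^{q^2}$, which is exactly \eqref{eq:main} over $\mathbb{F}_{q^4}$ with $p^k=q$ and coefficient pair $(b^{q^2},a)$ in place of $(a,b)$. So the fallback should simply be dropped. Its premise is false anyway: for $x=\delta^{-i}\in T$ one has $x^{q^2+1}=\lambda^{-i}$, which is not constant along $T$.
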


\begin{proof} With the similar method of Theorem \ref{theorem1} and  using Lemma \ref{q-1 relation}, we then have the following two claims.

{\bf Claim 1.}  The possible nonzero weights of  $\mathcal{C}(1,q^{2}-q+1)$ take values from the set
$$\{q^2+1,q^2,q^2-1,q^2-q\}.$$

{\bf Claim 2.}  The dimension of the code  $\mathcal{C}(1,q^{2}-q+1)^{\perp}$ is $q^2-3$ and the minimum Hamming distance of $\mathcal{C}(1,q^{2}-q+1)^{\perp}$ is 4 (resp. 5) when $q>2$.

 Denote $w_{0}=q^{2}+1$, $w_{1}=q^{2}$, $w_{2}=q^{2}-1$ and $w_{3}=q^{2}-q$. Let $A_{w_{i}}$ denote the number of the codewords with weight $w_{i}$ in $\mathcal{C}(1,q^{2}-q+1)$, where $0\leq i\leq 3$. Since  the minimum Hamming distance of $\mathcal{C}(1,q^{2}-q+1)^{\perp}$ is $4$ by {\bf Claim 2}, we then have
\[
\begin{cases}
\sum_{i=0}^{3}A_{w_{i}}=q^{8}-1,\\
\sum_{i=0}^{3}w_{i}A_{w_{i}}=q^{6}(q^{4}-1),\\
\sum_{i=0}^{3}w_{i}^{2}A_{w_{i}}=q^{8}(q^{4}-1),\\
\sum_{i=0}^{3}w_{i}^{3}A_{w_{i}}=q^{16}+2q^{14}-2q^{12}-2q^{10}+q^{4}.
\end{cases}
\]
from the first four Pless power moments.  Solving this system of equations gives that
\[
\begin{cases}
A_{w_{3}}=q^{5}-q,\\
A_{w_{0}}=A_{w_{2}}=(q^{4}-1)(q-1)q^{3}/2,\\
A_{w_{1}}=q^{7}-q^{5}+q^{4}-q^{3}+q-1.
\end{cases}
\]
Using the Assmus-Mattson Theorem, one can immediately obtain that the minimum weight codewords in $\mathcal{C}(1,q^{2}-q+1)$ support a $3$-$(q^{2}+1,q^{2}-q,\eta)$ design. Note that the number of the supports of the minimum weight codewords in $\mathcal{C}(1,q^{2}-q+1)$ is
\[
b=\frac{A_{q^{2}-q}}{q^{2}-1}=\frac{q^{5}-q}{q^{2}-1}=q^{3}+q.
\]
Then by \eqref{t-design}, we have
\[
\eta=(q^{2}-q-1)(q-2).
\]
Further, one has that the number of codewords with weight $4$ in $\mathcal{C}(1,q^{2}-q+1)^{\perp}$  is
\[
A_{4}^{\perp}=\frac{q^{2}(q-2)(q^{2}+1)(q^{2}-1)^{2}}{24}.
\]
due to the fifth Pless power moment. Again by the Assmus-Mattson Theorem,  the codewords of weight $4$ in $\mathcal{C}(1,q^{2}-q+1)^{\perp}$ support a $3$-$(q^{2}+1,4,q-2)$ design.

This completes the proof.
\end{proof}

\begin{remark}
In \cite[Theorem 35]{wang2023infinite},  a family of negacyclic codes over $\mathbb{F}_{q^2}$  with the same parameters $[q^2+1, 4, q^2 - q]$ supporting a $3$-design, also was constructed. However, one should note that the condition $q \equiv 3 \pmod{4}$ is needed in their conclusion. Our Theorem \ref{theorem2} generalizes their result to a broader family of $\lambda$-constacyclic codes, where $\mathrm{ord}(\lambda) = r$ satisfies $r \mid (q-1)$ and $\nu_2(r) = \nu_2(q-1)$. This includes the negacyclic case ($r=2$) when $q \equiv 3 \pmod{4}$.
\end{remark}

\begin{remark}
Note that for the special case $r = 1$, $p$ can only be $2$. One can check that $\mathcal{C}(1,q^2+q+1)$ and $\mathcal{C}(1,q^2-q+1)$ have the same nonzeros, which means they are the same codes. In particular, $\mathcal{C}(1,q^2+q+1)$ is an MDS code over $\mathbb{F}_4$ with the parameter $[5,4,2]$ when $q = 2$.
\end{remark}

\begin{remark}
Since the two constacyclic codes $\mathcal{C}(1,q^2-q+1)$ and $\mathcal{C}(1,q^2+q+1)$ derived from our construction share the same parameters, it is natural to ask whether $\mathcal{C}(1,q^2-q+1)$ and $\mathcal{C}(1,q^2+q+1)$ are equivalent. Generally, determining the equivalence of two codes is challenging. While equivalent codes share the same parameters $[n,k,d]$ and weight distributions, the converse is not necessarily true. Moreover, due to computational limitations, it is also difficult to verify the equivalence of the two codes by Magma. On the other hand, since the parameter $\lambda$ of these two types of codes are not mutually inclusive when $r > 2$, we study their weight distributions separately, and the theoretical study on the equivalence between the two codes is indeed a valuable topic of future research.
\end{remark}

\begin{example}
  Let $q=16$, $r=5$. Then $\mathcal{C}(1,241)$ has parameters $[257,4,240]$ and weight enumerator
\[
1+1048560x^{240}+2013235200x^{255}+267448335x^{256}+2013235200x^{257}.
\]
The dual code $\mathcal{C}(1,241)^{\perp}$ has parameters $[257,253,4]$. The codewords of weight $240$ in $\mathcal{C}(1,241)$ support a $3$-$(257,240,3346)$ design, and the codewords of weight $4$ in $\mathcal{C}(1,241)^{\perp}$ support a $3$-$(257,4,14)$ design, which satisfy Theorem \ref{theorem2}.
\end{example}

\begin{example}
  Let $q=13$, $r=4$. Then $\mathcal{C}(1,157)$ has parameters $[170,4,156]$ and weight enumerator
\[
1+371280x^{156}+376477920x^{168}+62403600x^{169}+376477920x^{170}.
\]
The dual code $\mathcal{C}(1,157)^{\perp}$ has parameters $[170,166,4]$. The codewords of weight $156$ in $\mathcal{C}(1,157)$ support a $3$-$(170,156,1705)$ design, and the codewords of weight $4$ in $\mathcal{C}(1,157)^{\perp}$ support a $3$-$(170,4,11)$ design. It is consistent with Theorem \ref{theorem2}.
\end{example}

\begin{example}
    Let $q=25$, $r=8$. Then $\mathcal{C}(1,601)$ has parameters $[626,4,600]$ and weight enumerator
\[
1+9765600x^{600}+73242000000x^{624}+6094125024x^{625}+73242000000x^{626}.
\]
The dual code $\mathcal{C}(1,601)^{\perp}$ has parameters $[626,622,4]$. The codewords of weight $600$ in $\mathcal{C}(1,601)$ support a $3$-$(626,600,13777)$ design, and the codewords of weight $4$ in $\mathcal{C}(1,601)^{\perp}$ support a $3$-$(626,4,23)$ design, which is consistent with Theorem \ref{theorem2}.
\end{example}

\section{Applications of the constacyclic codes}

In this section, we will further explore the  applications of the constacyclic codes proposed in Section \ref{main:result}, focusing on the following three aspects.  We firstly investigate the structure of their subfield subcodes, revealing their equivalence to classical ovoid codes. By utilizing the intersection properties of these codes and their duals, we construct entanglement-assisted quantum error-correcting codes (EAQECCs), demonstrating their potential in the field of quantum information. In addition, we analyze the  performance of their dual codes in the construction of locally recoverable codes (LRCs), proving that they can achieve certain bounds.

\subsection{Subfield subcodes of constacyclic codes}

Let $\mathcal{C}$ be an $[n,k]$ code over $\mathbb{F}_{q^m}$, where $q$ is a prime power and $m$ is a positive integer. The subfield subcode of $\mathcal{C}$ over $\mathbb{F}_{q}$, denoted by $\mathcal{C}|_{\mathbb{F}_{q}}$, is defined by
\[
\mathcal{C}|_{\mathbb{F}_{q}} = \mathcal{C} \cap \mathbb{F}_{q}^{n}.
\]

A fundamental result due to Delsarte establishes a profound connection between a subfield subcode and the trace code of its dual.

\begin{thm}{\rm (\cite[Delsarte Theorem]{Delsarte1998})}\label{subcode}
Let $\mathcal{C}$ be a linear code of length $n$ over $\mathbb{F}_{q^m}$. Then
\[
\mathcal{C}|_{\mathbb{F}_{q}} = \left( \mathrm{Tr}_{q^m/q}(\mathcal{C}^\perp) \right)^\perp.
\]
\end{thm}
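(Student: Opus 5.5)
The statement to prove is Delsarte's theorem: for a linear code $\mathcal{C}$ of length $n$ over $\mathbb{F}_{q^m}$, $\mathcal{C}|_{\mathbb{F}_q}=\left(\mathrm{Tr}_{q^m/q}(\mathcal{C}^\perp)\right)^\perp$. The plan is to prove two inclusions. Since both sides are $\mathbb{F}_q$-subspaces of $\mathbb{F}_q^n$, I would prove the equivalent statement obtained by taking duals over $\mathbb{F}_q$: $(\mathcal{C}|_{\mathbb{F}_q})^\perp=\mathrm{Tr}_{q^m/q}(\mathcal{C}^\perp)$. Throughout, $\perp$ of a subspace of $\mathbb{F}_q^n$ is taken with respect to the standard bilinear form over $\mathbb{F}_q$. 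That form is nondegenerate, so $V^{\perp\perp}=V$, and the two formulations are equivalent.

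\textbf{Easy inclusion.} I will show $\mathrm{Tr}_{q^m/q}(\mathcal{C}^\perp)\subseteq(\mathcal{C}|_{\mathbb{F}_q})^\perp$. Take $c\in\mathcal{C}\cap\mathbb{F}_q^n$ and $x\in\mathcal{C}^\perp$. Since $\mathrm{Tr}_{q^m/q}$ is $\mathbb{F}_q$-linear and each $c_i\in\mathbb{F}_q$,
\[
\sum_i c_i\,\mathrm{Tr}_{q^m/q}(x_i)=\mathrm{Tr}_{q^m/q}\Big(\sum_i c_i x_i\Big)=\mathrm{Tr}_{q^m/q}(0)=0.
\]

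\textbf{Reverse inclusion.} I will show $\left(\mathrm{Tr}_{q^m/q}(\mathcal{C}^\perp)\right)^\perp\subseteq\mathcal{C}|_{\mathbb{F}_q}$. Let $u\in\mathbb{F}_q^n$ be orthogonal to every $\mathrm{Tr}_{q^m/q}(x)$ with $x\in\mathcal{C}^\perp$. The goal is $u\in\mathcal{C}$, and since $\mathcal{C}=(\mathcal{C}^\perp)^\perp$ over $\mathbb{F}_{q^m}$, it suffices to show $\langle u,x\rangle=0$ for every $x\in\mathcal{C}^\perp$.

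Fix $x\in\mathcal{C}^\perp$ and let $s=\sum_i u_ix_i\in\mathbb{F}_{q^m}$. For every $a\in\mathbb{F}_{q^m}$, the vector $ax$ also lies in $\mathcal{C}^\perp$, so by hypothesis and linearity,
\[
0=\sum_i u_i\,\mathrm{Tr}_{q^m/q}(ax_i)=\mathrm{Tr}_{q^m/q}(as).
\]
Thus $\mathrm{Tr}_{q^m/q}(as)=0$ for all $a\in\mathbb{F}_{q^m}$.

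The main step is the nondegeneracy of the trace form: $\mathrm{Tr}_{q^m/q}$ is a nonzero map, being a polynomial of degree $q^{m-1}<q^m$ and hence not identically zero. If $s\neq0$, then $as$ would range over all of $\mathbb{F}_{q^m}$ as $a$ does, forcing the trace to vanish identically. Hence $s=0$, so $u\in\mathcal{C}\cap\mathbb{F}_q^n$.

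The two inclusions together give the equality. The only point needing care is that $\mathcal{C}^\perp$ is closed under $\mathbb{F}_{q^m}$-scalars, which is what allows the argument to reach every element $as$; this holds because $\mathcal{C}^\perp$ is an $\mathbb{F}_{q^m}$-linear code.
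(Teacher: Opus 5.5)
The paper does not prove this result; it quotes it from Delsarte, so there is no in-paper argument to compare against. Your proof is the standard one and it is correct. The essential ingredients are all in place: using $c_i\in\mathbb{F}_q$ to pull coefficients through $\mathrm{Tr}_{q^m/q}$, closing $\mathcal{C}^\perp$ under $\mathbb{F}_{q^m}$-scalars, nondegeneracy of the trace form, and $\mathcal{C}=(\mathcal{C}^\perp)^\perp$ over $\mathbb{F}_{q^m}$.

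One small presentational point: you announce the dual formulation $(\mathcal{C}|_{\mathbb{F}_q})^\perp=\mathrm{Tr}_{q^m/q}(\mathcal{C}^\perp)$, but you then prove the reverse inclusion in the original formulation. This is harmless. The computation in your easy inclusion shows directly that $\mathcal{C}|_{\mathbb{F}_q}\subseteq(\mathrm{Tr}_{q^m/q}(\mathcal{C}^\perp))^\perp$, so the double-dual step over $\mathbb{F}_q$ is not actually needed.
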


 An ovoid in the projective space \(\mathrm{PG}(3, \mathbb{F}_{q})\) is a set of \( q^2 + 1 \) points such that no three of them are collinear (i.e., on the same line). An ovoid code is a linear code over \(\mathbb{F}_{q}\) with parameters \([q^2 + 1,4,q^2 - q]\). There exists a one-to-one correspondence between ovoids and ovoid codes. The following theorem summarizes key properties of these codes.

\begin{thm}{\rm (\cite{ding2022designs})}\label{ovoid}
Let \( q > 2 \) be a prime power. Then

{\rm (1)} Every ovoid code \(\mathcal{C}\) over \(\mathbb{F}_{q}\) must have parameters \([q^2 + 1,4,q^2 - q]\) and weight enumerator
\[
1 + (q^2 - q)(q^2 + 1)z^{q^2 - q} + (q - 1)(q^2 + 1)z^{q^2}.
\]
When \( q \) is odd, all ovoid codes over \(\mathbb{F}_{q}\) are monomially equivalent. When \( q \) is even, the elliptic quadric code and the Tits ovoid code over \(\mathbb{F}_{q}\) are not monomially equivalent.

{\rm (2)} The dual of every ovoid code \(\mathcal{C}\) over \(\mathbb{F}_{q}\) must have parameters \([q^2 + 1, q^2 - 3,4]\).

 {\rm (3)} The minimum weight codewords in an ovoid code over \(\mathbb{F}_{q}\) support a \(3\)-\((q^2 + 1, q^2 - q, (q-2)(q^2 - q-1))\) design and the complementary design of this design is a Steiner system \(S(3,1+q,1+q^2)\). When \( q \) is odd, all these \(3\)-\((q^2 + 1, q^2 - q, (q-2)(q^2 - q-1))\) designs are isomorphic. The minimum weight codewords in the dual of an ovoid code over \(\mathbb{F}_{q}\) support a \(3\)-\((q^2 + 1, 4, q-2)\) design.
\end{thm}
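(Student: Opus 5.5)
The statement is the cited Theorem \ref{ovoid}, which summarizes classical facts. The plan is to derive everything from the geometry of an ovoid $\mathcal{O}=\{P_1,\ldots,P_{q^2+1}\}\subset \mathrm{PG}(3,\mathbb{F}_q)$. We take the ovoid code $\mathcal{C}$ to be the code generated by the $4\times(q^2+1)$ matrix whose columns are fixed representatives of the $P_i$. Since an ovoid spans $\mathrm{PG}(3,q)$, this matrix has rank $4$, so $\dim\mathcal{C}=4$.

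\textbf{Weights of $\mathcal{C}$.} A nonzero codeword corresponds to a linear form $\ell$, and its weight is $q^2+1-|\mathcal{O}\cap H|$, where $H=\ker\ell$ is a plane. The standard ovoid property is that every plane meets $\mathcal{O}$ in either $1$ point (a tangent plane) or $q+1$ points (an oval). One proves this by double counting. Through each point of $\mathcal{O}$ there are $q+1$ tangent lines, and they form a plane. A plane through a point $P\in\mathcal{O}$ that is not the tangent plane at $P$ contains exactly $2$ tangent lines at that point, and counting secants then gives the section size $q+1$. Hence the weights are $q^2-q$ and $q^2$.

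\textbf{Weight enumerator.} There are $q^2+1$ tangent planes, one per point, and the remaining $q^3+q$ planes are secant. Each plane gives $q-1$ codewords, so $A_{q^2}=(q-1)(q^2+1)$ and $A_{q^2-q}=(q-1)(q^3+q)=(q^2-q)(q^2+1)$.

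\textbf{Dual.} The minimum distance of $\mathcal{C}^\perp$ is the smallest number of dependent columns. Columns are pairwise independent because the points are distinct. No three are dependent because no three points of $\mathcal{O}$ are collinear. Four coplanar points do exist, since a secant plane meets $\mathcal{O}$ in $q+1\ge 4$ points, so $d^\perp=4$.

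\textbf{Designs.} Apply Theorem \ref{A-M} with $t=3$. Here $d^\perp-t=1$, and $\mathcal{C}$ has exactly one weight in $\{1,\ldots,n-3\}$, namely $q^2-q$, because $q^2>n-3$. This yields $3$-designs on the minimum-weight supports of both $\mathcal{C}$ and $\mathcal{C}^\perp$. The parameter $\eta$ comes from \eqref{t-design} with $b=q^3+q$. The complements of these blocks are the oval sections, and any $3$ points of $\mathcal{O}$ span a unique plane, which is secant. So the complementary design is the Steiner system $S(3,q+1,q^2+1)$.

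For the dual, the value $\eta=q-2$ follows from the fifth Pless moment, as in Theorem \ref{theorem1}. Alternatively, a weight-$4$ support is a $4$-subset of an oval section, and the count is $\binom{q-2}{1}$ for a fixed triple.

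\textbf{Equivalence and isomorphism claims.} These are the deep part and the main obstacle. For odd $q$ they rest on Barlotti–Panella: every ovoid is an elliptic quadric, so all ovoid codes are projectively equivalent, hence monomially equivalent, and the designs are isomorphic. For even $q$, we distinguish the elliptic quadric from the Tits ovoid ($q=2^{2e+1}\ge 8$) by their inequivalence under $\mathrm{P\Gamma L}(4,q)$, for instance through their different stabilizer groups. These cannot be re-derived briefly, so we simply cite them.
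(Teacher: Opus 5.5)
The paper gives no proof of this theorem; it is quoted from Ding and Tang's book. There is therefore no internal argument to compare with, and your outline has to stand on its own.

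Its architecture is the standard one, and the coding-theoretic steps are sound:
\begin{itemize}
\item the weight of the codeword attached to a plane $H$ is $q^2+1-|\mathcal{O}\cap H|$;
\item the counts $A_{q^2}=(q-1)(q^2+1)$ and $A_{q^2-q}=(q^2-q)(q^2+1)$ are correct;
\item $d^\perp=4$ follows from the cap property;
\item the Assmus--Mattson check works, since only $q^2-q$ lies in $\{1,\ldots,q^2-2\}$;
\item $\eta=(q-2)(q^2-q-1)$ follows from $b=q^3+q$;
\item the Steiner system follows because three points of $\mathcal{O}$ span a unique plane, which is then secant;
\item your direct count $\eta=q-2$ for the dual is correct.
\end{itemize}
The Steiner system alone already gives the $3$-design for $\mathcal{C}$, so Assmus--Mattson is not needed there. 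Citing Barlotti--Panella and the inequivalence of the Tits ovoid and the elliptic quadric for the equivalence claims is appropriate. For even $q$, you should state explicitly that monomially equivalent codes have $\mathrm{PGL}(4,q)$-equivalent column point sets, since that is the direction you need.

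\textbf{The gap: the plane-section property.} This is the one step you argue rather than cite, and it has two problems.
\begin{itemize}
\item \emph{The count is wrong as written.} If the tangents at $P$ fill a plane $\tau_P$, then a plane $\pi\ni P$ with $\pi\neq\tau_P$ meets $\tau_P$ in a single line. So $\pi$ contains exactly \emph{one} tangent at $P$, and its other $q$ lines through $P$ are secants, giving $|\pi\cap\mathcal{O}|=q+1$. Two tangents would give $q$.
\item \emph{Coplanarity of the $q+1$ tangents at $P$ is not a double-counting fact.} It is the whole content of the property.
\end{itemize}
For $q$ odd you can prove coplanarity directly. Every plane through $P$ meets $\mathcal{O}$ in an arc of at most $q+1$ points, so it contains a tangent at $P$. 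Hence, in the quotient plane at $P$, the $q+1$ tangent directions form a blocking set of size $q+1$. Such a set is a line: each of the $q+1$ lines through a point off the set meets it exactly once, so any line joining two of its points lies inside it.

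For $q$ even this fails. Plane arcs of size $q+2$ exist, so a plane through $P$ could a priori meet $\mathcal{O}$ in a hyperoval and contain no tangent at $P$. Excluding hyperoval sections of a $(q^2+1)$-cap is a genuinely nontrivial classical theorem. You should cite it, for example from Hirschfeld's or Dembowski's treatment of ovoids, rather than claim a one-line count.

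\textbf{A smaller omission.} The statement ``the remaining $q^3+q$ planes are secant'' silently assumes that no plane misses $\mathcal{O}$. Such a plane would produce a codeword of weight $q^2+1$. You need the incidence count. Distinct points have distinct tangent planes, because $\tau_P=\tau_Q$ would put the secant $PQ$ inside $\tau_P$. Hence $(q^2+1)(q^2+q+1)=(q^2+1)+(q+1)\,n_{q+1}$, which gives $n_{q+1}=q^3+q$. Since $(q^2+1)+(q^3+q)$ is already the total number of planes, no plane is external.
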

We now present the main results of this subsection, which characterizes the subfield subcodes of the constacyclic codes constructed in Section 3.

\begin{thm}\label{subcode2}
The subfield subcode $\mathcal{C}\left(1,q^{2}-q+1\right)|_{\mathbb{F}_{q}}$ has parameters $[q^{2}+1,4,q^{2}-q]$ and weight enumerator
\[
1+(q^{2}-q)(q^{2}+1)z^{q^{2}-q}+(q-1)(q^{2}+1)z^{q^{2}}.
\]
The dual code of $\mathcal{C}\left(1,q^{2}-q+1\right)|_{\mathbb{F}_{q}}$ has parameters $[q^{2}+1,q^{2}-3,4]$.
\end{thm}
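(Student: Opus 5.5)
The plan is to show that $\mathcal{C}(1,q^{2}-q+1)|_{\mathbb{F}_{q}}$ is itself a $4$-dimensional $\lambda$-constacyclic code over $\mathbb{F}_q$ with the same nonzeros as $\mathcal{C}(1,q^{2}-q+1)$. Its weights are then controlled by Claim 1 of Theorem \ref{theorem2}, and its dual distance by an argument along the lines of Claim 2; the Pless power moments then fix the weight distribution, which forces the code to be an ovoid code, so Theorem \ref{ovoid} gives the dual parameters.

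\textbf{Step 1: nonzeros and dimension.} Since $r\mid(q-1)$, we have $\lambda\in\mathbb{F}_q$. Hence the subfield subcode is an ideal of $\mathbb{F}_q[x]/\langle x^n-\lambda\rangle$, and its nonzeros are the union of the full $q$-cyclotomic cosets modulo $rn$ lying inside the nonzero set $\{1,q^2,q^2-q+1,q^4-q^3+q^2\}$ of $\mathcal{C}(1,q^{2}-q+1)$. I would compute $C_1^{(q,rn)}$ directly, using $q\equiv 1 \pmod r$ and $r\mid q^2-1$:
\begin{itemize}
\item $q^3=q(q^2+1)-q\equiv q^2+1-q \pmod{r(q^2+1)}$;
\item $q^4\equiv 1 \pmod{r(q^2+1)}$;
\item $q^4-q^3+q^2\equiv q \pmod{r(q^2+1)}$.
\end{itemize}
Thus $C_1^{(q,rn)}=\{1,q,q^2,q^2-q+1\}$ is exactly the nonzero set. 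So the subfield subcode has check polynomial $g_1(x)g_{q^2-q+1}(x)\in\mathbb{F}_q[x]$ and dimension $4$. Equivalently, via Theorem \ref{subcode} and Lemma \ref{trace exp}, it equals $\{(\mathrm{Tr}_{q^4/q}(a\delta^{-i}))_i : a\in\mathbb{F}_{q^4}\}$.

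\textbf{Step 2: weights and dual distance.} Being contained in $\mathcal{C}(1,q^{2}-q+1)$, the code has nonzero weights in $\{q^2-q,q^2-1,q^2,q^2+1\}$ by Claim 1 of Theorem \ref{theorem2}. For the dual distance, a dual codeword of weight $w\le 3$ with entries in $\mathbb{F}_q$ yields a relation $\sum_k c_k x_k^{e}=0$ for every $e$ in the zero set of the dual, i.e.\ for the exponents coming from the coset above. In particular this gives the same system in the exponents $1$, $q^2-q+1$ and $q^2$ that appeared in Claim 2 of Theorem \ref{theorem2}, now with coefficients in $\mathbb{F}_q$. Therefore the same determinant arguments, with Lemma \ref{q-1 relation} in place of Lemma \ref{q+1 relation}, exclude $w=2,3$; $w=1$ is trivial. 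This gives $d^\perp\ge 4$.

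\textbf{Step 3: weight enumerator and conclusion.} With $d^\perp\ge4$, the first four Pless power moments over $\mathbb{F}_q$ (total $q^4-1$ nonzero codewords, length $q^2+1$) form a $4\times4$ Vandermonde-type system in $A_{q^2-q},A_{q^2-1},A_{q^2},A_{q^2+1}$. I expect its solution to be
\[
A_{q^2-1}=A_{q^2+1}=0,\qquad A_{q^2-q}=(q^2-q)(q^2+1),\qquad A_{q^2}=(q-1)(q^2+1),
\]
which is the claimed enumerator. In particular $d=q^2-q$, so the code is an ovoid code. Theorem \ref{ovoid}(2) then gives the dual parameters $[q^2+1,q^2-3,4]$; alternatively, $A_4^\perp>0$ follows from the next moment.

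The main obstacle is Step 2. One has to check carefully that the reduction $x_0=x^{q-1}$ via Lemma \ref{q-1 relation} reproduces the Claim 2 contradiction (that proof was written for the $q+1$ family). I would redo the $3\times3$ determinant with exponents $1,q^2-q+1,q^2$, writing $x^{q^2-q+1}=x\cdot x_0^{-q}$-type expressions, and then mimic the conjugation trick that forces $x_0=y_0$.
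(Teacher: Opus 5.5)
Your proposal is correct, but it takes a different route from the paper after Step~1. The dimension computation is essentially the same: the paper runs it through Delsarte's theorem and the coset $C_{-1}^{(q,rn)}$ of the dual's zeros, while you compute $C_1^{(q,rn)}$ directly; your version is cleaner.

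From there, the paper uses only the minimum distance from Theorem~\ref{theorem2}. Inclusion in $\mathcal{C}(1,q^2-q+1)$ gives $d\ge q^2-q$, and the Griesmer bound gives $d\le q^2-q$. So the code is an ovoid code by definition, and both the weight enumerator and the dual parameters are then quoted from Theorem~\ref{ovoid}.

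You instead use the full content of Theorem~\ref{theorem2}: the four-weight restriction from Claim~1 and the dual distance from Claim~2. You then solve the first four Pless moments over $\mathbb{F}_q$. Your predicted solution is the unique one, since the ovoid distribution does satisfy moments $0$ through $3$; one can check this directly, or note that the elliptic quadric code has the same $n,k,q$ and dual distance $4$. So you derive the enumerator rather than importing it from the classification of ovoid codes. You need Theorem~\ref{ovoid}(2), the fifth moment, or a Singleton/MDS argument only to pin $d^\perp=4$ exactly. The paper's route is shorter; yours is more self-contained.

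What you call the main obstacle in Step~2 is not actually an obstacle. The subfield subcode has $\mathbb{F}_q$-dimension $4=\dim_{\mathbb{F}_{q^2}}\mathcal{C}(1,q^2-q+1)$, so it spans $\mathcal{C}(1,q^2-q+1)$ over $\mathbb{F}_{q^2}$. Hence its dual is contained in $\mathcal{C}(1,q^2-q+1)^\perp$, and $d^\perp\ge 4$ follows at once from Claim~2 of Theorem~\ref{theorem2}. You essentially observed this already when you noted that the $\mathbb{F}_q$-relations form the same system.

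If you do redo the determinant, it goes through. With $x_0=x^{q-1}$ it reduces to
\[
(x_0^q-1)(y_0^{q+1}-1)=(x_0^{q+1}-1)(y_0^q-1).
\]
The $q^2$-power conjugation multiplies the left side by $x_0$ and the right side by $y_0$, which forces $x_0=y_0$. The exceptional case where both sides vanish forces $x_0=y_0=-1$.
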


\begin{proof}
From Theorem \ref{subcode},  one has that
\[
\mathcal{C}\left(1,q^{2}-q+1\right)|_{\mathbb{F}_{q}} = \left( \mathrm{Tr}_{q^{2}/q} \left( \mathcal{C} \left(1,q^{2}-q+1\right)^{\perp} \right) \right)^{\perp}.
\]
We first consider the generator polynomial of $\mathrm{Tr}_{q^{2}/q} \left( \mathcal{C} \left(1,q^{2}-q+1\right)^{\perp} \right)$.
By the definition of $\mathcal{C}\left(1,q^{2}-q+1\right)$, we know that $\delta$, $\delta^{q}$, $\delta^{q^{2}}$ and $\delta^{q^{2}-q+1}$ are all nonzeros of $\mathcal{C}\left(1,q^{2}-q+1\right)$. Then from Lemma \ref{generator and check poly}, all zeros of $\mathcal{C}\left(1,q^{2}-q+1\right)^{\perp}$ are $\delta^{rn-1}$, $\delta^{rn-q}$, $\delta^{rn-q^2}$ and $\delta^{rn-q^{2}+q-1}$.

Let the $q$-cyclotomic coset of $s$ modulo $rn$ be denoted by $C_{s}^{(q,rn)}$, then
\[
C_{-1}^{(q,rn)}=\left\{rn-1,\,rn-q,\,rn-q^2,\,rn-q^2+q-1\right\}.
\]
Then it follows from Lemma \ref{trace exp} that $\mathrm{Tr}_{q^{2}/q}\left(\mathcal{C}\left(1,q^{2}-q+1\right)^{\perp}\right)$ is a constacyclic code over $\mathbb{F}_{q}$ with generator polynomial
\[
\left( x - \delta^{rn-1} \right) \left( x - \delta^{rn-q} \right) \left( x - \delta^{rn-q^2} \right) \left( x - \delta^{rn-q^2+q-1} \right).
\]
Hence, the dimension of $\mathcal{C}\left(1,q^{2}-q+1\right)|_{\mathbb{F}_{q}}$ is 4.

From Theorem \ref{theorem2}, $\mathcal{C}\left(1,q^{2}-q+1\right)$ has minimum distance $q^{2}-q$. Then it follows from the definition of subfield subcodes that the minimum distance $d \left( \mathcal{C}\left(1,q^{2}-q+1\right)|_{\mathbb{F}_{q}} \right) \geq q^{2}-q$. By the Griesmer bound \cite{Griesmer1960}, we know that $d \left( \mathcal{C}\left(1,q^{2}-q+1\right)|_{\mathbb{F}_{q}} \right) \leq q^{2}-q$. Hence, we deduce that $\mathcal{C}\left(1,q^{2}-q+1\right)|_{\mathbb{F}_{q}}$ has parameters $[q^{2}+1,4,q^{2}-q]$. This means that $\mathcal{C}\left(1,q^{2}-q+1\right)|_{\mathbb{F}_{q}}$ is an ovoid code. The weight distribution of $\mathcal{C}\left(1,q^{2}-q+1\right)|_{\mathbb{F}_{q}}$ and the parameters of the dual code then follow from Theorem \ref{ovoid}. This completes the proof.
\end{proof}

\begin{example}
Let $q=13$, $r=4$. Magma experiments show that $\mathcal{C}(1,157)|_{\mathbb{F}_{13}}$ has parameters $[170,4,156]$ and weight enumerator
\[
1+26520x^{156}+2040x^{169}.
\]
The dual code of $\mathcal{C}(1,157)|_{\mathbb{F}_{13}}$ has parameters $[170,166,4]$. It is consistent with Theorem \ref{subcode2}.
\end{example}

The following theorem provides the conditions for distinguishing subfield subcodes when $\lambda$ is not in the subfield.

\begin{thm}\label{b2017}\cite{Blackford2017}
Let $\mathcal{C}$ be a $\lambda$-constacyclic code over $\mathbb{F}_{q^{m}}$ of length $n$, with $\lambda \in \mathbb{F}_{q^{m}}^{\times}$ of order $r$. 
Let $k$ be the multiplicative order of $q$ modulo $r$, so that $\mathbb{F}_{q^{k}}$ is the smallest subfield of $\mathbb{F}_{q^{m}}$ containing $\lambda$, and write $m = kl$. 
Let $g(x)$ be the generator polynomial of $\mathcal{C}$, and let $T$ be the set of index of zeros of generator polynomial $g(x)$. Define
\[
T_{2} = T \cup q^{k}T \cup \cdots \cup q^{k(l-1)}T.
\]
Then the subfield code $\mathcal{C}|_{\mathbb{F}_{q}}=0$ if and only if $|T_{2}| \geq \frac{n}{k}$.
\end{thm}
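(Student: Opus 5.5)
The plan is to argue directly with roots of the codeword polynomials, using only that each ideal in $\mathbb{F}_{q^m}[x]/\langle x^n-\lambda\rangle$ is principal. Since $\gcd(n,q)=1$ is implicit here, let $\delta$ be a primitive $rn$-th root of unity with $\delta^n=\lambda$. The roots of $x^n-\lambda$ are then the distinct elements $\delta^{t}$ with $t\equiv 1 \pmod r$. We have $T\subseteq\{t: t\equiv 1 \pmod r\}$, and $c(x)$ of degree $<n$ lies in $\mathcal{C}$ if and only if $c(\delta^t)=0$ for all $t\in T$, i.e.\ $g(x)\mid c(x)$.

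\textbf{Sufficiency.} Suppose $c\in\mathcal{C}|_{\mathbb{F}_q}$, so $c(x)\in\mathbb{F}_q[x]$ with $\deg c<n$. For any $j\ge 0$ and $t\in T$, the Frobenius gives $c(\delta^{tq^j})=c(\delta^t)^{q^j}=0$. Hence $c$ vanishes at $\delta^{s}$ for every $s$ in
\[
S=\bigcup_{j=0}^{k-1} q^j T_2,
\]
where $T_2=\bigcup_{i=0}^{l-1}q^{ki}T$ is defined as in the statement. I would then count $S$. Each $s\in q^jT_2$ satisfies $s\equiv q^j\pmod r$, because $t\equiv1$ and $q^{k}\equiv 1 \pmod r$. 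Since $k=\mathrm{ord}_r(q)$, the residues $q^0,\dots,q^{k-1}$ are pairwise distinct mod $r$, so the sets $q^jT_2$ for $0\le j\le k-1$ are pairwise disjoint. Multiplication by $q^j$ is a bijection on $\mathbb{Z}_{rn}$, so each $q^jT_2$ has size $|T_2|$, and therefore $|S|=k|T_2|$. Distinct exponents modulo $rn$ give distinct roots $\delta^s$. Thus if $|T_2|\ge n/k$, the polynomial $c$ of degree $<n$ has at least $n$ distinct roots, which forces $c=0$ and hence $\mathcal{C}|_{\mathbb{F}_q}=0$.

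\textbf{Necessity.} Suppose $k|T_2|<n$, and set $M(x)=\prod_{s\in S}(x-\delta^s)$. The set $S$ is closed under multiplication by $q$ modulo $rn$: it is closed under $q^k$ because $T$ is closed under $q^m=q^{kl}$, and multiplying $q^{k-1}T_2$ by $q$ lands in $q^kT_2=T_2$. So $S$ is a union of $q$-cyclotomic cosets modulo $rn$, and $M(x)\in\mathbb{F}_q[x]$. Moreover $\deg M=k|T_2|<n$, so $M$ is a nonzero vector of $\mathbb{F}_q^n$. Since $T\subseteq T_2\subseteq S$, every root of $g$ is a root of $M$, and these roots are distinct, so $g\mid M$ and $M\in\mathcal{C}$. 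Hence $\mathcal{C}|_{\mathbb{F}_q}\neq 0$.

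The step needing the most care is the disjointness count $|S|=k|T_2|$, which uses both $t\equiv1\pmod r$ for roots of $x^n-\lambda$ and $k=\mathrm{ord}_r(q)$. Once that count and the closure of $S$ under $q$ are verified, both directions follow from elementary root counting.
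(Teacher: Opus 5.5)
Your proof is correct and complete. There is nothing in the paper to compare it with: the paper gives no proof of this statement. It quotes the result from \cite{Blackford2017} and only applies it, in Theorem~\ref{subcode1}, in the special case $m=k=2$, $l=1$, where $T_2=T$.

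The facts your argument rests on all check out:
\begin{itemize}
\item A vector lies in $\mathcal{C}$ exactly when its polynomial, of degree $<n$, is divisible by $g(x)$.
\item Every index $s\in q^jT_2$ satisfies $s\equiv q^j \pmod r$. Since $q^0,\dots,q^{k-1}$ are distinct modulo $r$, this gives $|S|=k|T_2|$.
\item $q^kT_2=T_2$, because $T$ is closed under multiplication by $q^{m}=q^{kl}$. Hence $M(x)\in\mathbb{F}_q[x]$.
\end{itemize}
One small point should be stated explicitly. Multiplication by $q^j$ is a bijection on $\mathbb{Z}_{rn}$ because $\gcd(q,r)=1$, which follows from $r\mid q^m-1$, together with $\gcd(q,n)=1$.

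You were right to make $\gcd(n,q)=1$ an explicit hypothesis. The statement as transcribed omits it, but without it $x^n-\lambda$ has repeated roots. A set of indices of zeros then no longer determines $g(x)$, and no primitive $rn$-th root of unity exists. In the paper's applications the hypothesis holds, since $n=q^2+1$.

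Your route also proves more than the criterion. Over $\mathbb{F}_q$, a polynomial of degree $<n$ is divisible by $g(x)$ if and only if it vanishes on all $\delta^s$ with $s\in S$, that is, if and only if it is divisible by $M(x)$. Therefore
\[
\dim \mathcal{C}|_{\mathbb{F}_q}=\max\{0,\;n-k|T_2|\},
\]
and the stated criterion is the case where this dimension is $0$.

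Elsewhere the paper treats subfield subcodes through the Delsarte/trace-code duality (Theorem~\ref{subcode}) and cyclotomic cosets of the trace code. Your argument avoids the dual and the trace code altogether and handles every $r$ uniformly. In particular, it covers the $r=2$ case of Theorem~\ref{subcode1}, which the paper treats by that separate coset computation, by a simple count of $|T\cup qT|$.
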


\begin{thm}\label{subcode1}
Let $\mathcal{C}(1,q^{2}+q+1)$ be the constacyclic code defined in Section 3.1 and $r\neq 1.$ Then it holds that $$\mathcal{C}(1,q^{2}+q+1)|_{\mathbb{F}_{q}} = \{0\},$$ i.e., its subfield subcode over $\mathbb{F}_{q}$ is trivial.
\end{thm}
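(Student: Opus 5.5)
The plan is to apply Theorem~\ref{b2017} with the base field $\mathbb{F}_{q^2}$ viewed as $\mathbb{F}_{q^m}$ with $m=2$. Everything then reduces to counting the zero set of the generator polynomial of $\mathcal{C}(1,q^2+q+1)$ and its images under the relevant Frobenius powers. I assume $q>2$ as in Theorem~\ref{theorem1}.

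\textbf{Step 1: the zero set.} The roots of $x^n-\lambda$ are $\delta^{i}$ with $i\in 1+r\mathbb{Z}_{rn}$, which gives exactly $n$ indices modulo $rn$. The nonzeros of $\mathcal{C}(1,q^2+q+1)$ correspond to
\[
N=\{1,\;q^2,\;q^2+q+1,\;q^4+q^3+q^2\}\pmod{rn}.
\]
So the zero index set is $T=(1+r\mathbb{Z}_{rn})\setminus N$, and $|T|=n-4$.

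\textbf{Step 2: the value of $k$.} Here $k=\mathrm{ord}_r(q)$, and I split according to $r$. Since $r\mid q+1$ and $r\neq 1$, we have $q\equiv -1\pmod r$. Hence $k=2$ when $r>2$, and $k=1$ when $r=2$. The case $r=2$ occurs only when $q\equiv 1\pmod 4$, because of the condition $\nu_2(r)=\nu_2(q+1)$.

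\textbf{Case $r>2$.} Then $k=2=m$, so $l=1$ and $T_2=T$. The criterion of Theorem~\ref{b2017} becomes $n-4\ge n/2$, that is, $q^2+1\ge 8$. This holds for all $q\ge 3$, so the subfield subcode is $\{0\}$. (For $q=2$ the only admissible choice is $r=3$, and there $|T|=1<5/2$. This is why $q>2$ is needed.)

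\textbf{Case $r=2$.} This is the main obstacle. Here $k=1$ and $l=2$, so $T_2=T\cup qT$, and we need $|T_2|\ge n$. Since $T_2\subseteq 1+2\mathbb{Z}_{2n}$, this means $T_2$ must be the whole set $1+2\mathbb{Z}_{2n}$. Equivalently, every element of $N$ must lie in $qT$, i.e.\ $q^{-1}x\notin N$ for each $x\in N$. This amounts to showing
\[
N\cap qN=\emptyset \pmod{2n},\qquad qN=\{q,\;q^3,\;q^3+q^2+q,\;q^5+q^4+q^3\}.
\]

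The difficulty is that the two sets coincide modulo $n$. Using $q^2\equiv -1\pmod n$, both $N$ and $qN$ reduce to $\{1,-1,q,-q\}$ modulo $n$. So I must show that each pair of elements of $N$ and $qN$ that agree modulo $n$ differ by an odd multiple of $n$, hence are distinct modulo $2n$. I will do this by direct computation.
\begin{itemize}
\item $(q^2+q+1)-q=n$.
\item $q^3-q^2=q^2(q-1)$ must be compared with $n$, using $(q^2+1)\mid q^4-1$.
\item For the remaining pairs I reduce modulo $2n$ using $q^2\equiv -1\pmod n$ together with $q\equiv 1\pmod 4$, which gives $2n\equiv 4\pmod 8$. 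I then check that each difference is an odd multiple of $n$.
\end{itemize}

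Once $N\cap qN=\emptyset$ is established, we get $|T_2|=n$, and Theorem~\ref{b2017} again gives $\mathcal{C}(1,q^2+q+1)|_{\mathbb{F}_q}=\{0\}$.

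If the Blackford criterion proves awkward in this case, my fallback is a direct argument. For $c\in\mathcal{C}\cap\mathbb{F}_q^n$, applying Frobenius coordinatewise shows that $c$ also lies in the code whose nonzeros are indexed by $qN$. So $c$ lies in the intersection of the two codes, whose nonzero set is $N\cap qN=\emptyset$, which forces $c=0$.
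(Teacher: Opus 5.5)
Your proposal is correct. For $r>2$ it is the paper's argument: Theorem~\ref{b2017} with $k=2$, $l=1$, $T_2=T$, and $|T|=n-4\ge n/2$ for $q>2$.

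For $r=2$ you take a different, though dual, route.
\begin{itemize}
\item The paper uses Delsarte's theorem (Theorem~\ref{subcode}). It lists the zeros $\delta^{-1},\delta^{-q^2},\delta^{-(q^2+q+1)},\delta^{q}$ of $\mathcal{C}(1,q^2+q+1)^\perp$ and asserts that each of the two $q$-cyclotomic cosets $C_{-1}^{(q,2n)}$ and $C_q^{(q,2n)}$ containing them also contains a nonzero of the dual. Hence $\mathrm{Tr}_{q^2/q}(\mathcal{C}^\perp)$ has no zeros, so it is $\mathbb{F}_q^n$ and its dual is $\{0\}$.
\item You apply Blackford's criterion with $k=1$, $l=2$, or equivalently your Frobenius fallback, and reduce to $N\cap qN=\emptyset$ modulo $2n$.
\end{itemize}
Both say that no $q$-orbit of odd residues modulo $2n$ lies entirely inside the nonzero index set. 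Your version uses a single criterion for both cases. Your fallback is fully self-contained, needing neither Delsarte nor Blackford. Your explicit residue check also supplies what the paper leaves as ``it can be seen.''

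One slip in your computation: your second bullet compares $q^3$ with $q^2$, but these are not congruent modulo $n$, since $q^3\equiv -q$ and $q^2\equiv -1$. The correct matching of $N$ with $qN$ modulo $n$ is $1\leftrightarrow q^5+q^4+q^3$, $q^2\leftrightarrow q^3+q^2+q$, $q^2+q+1\leftrightarrow q$, and $q^4+q^3+q^2\leftrightarrow q^3$. The corresponding differences are exactly $(q^3+q^2-1)n$, $qn$, $n$ and $q^2n$. All are odd multiples of $n$ because $q$ is odd, so each pair is distinct modulo $2n$. Only oddness of $q$ is used here; $q\equiv 1 \pmod 4$ matters only in making $r=2$ admissible.

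Your remark that $q>2$ is needed is right. For $q=2$, $r=3$, the code is $[5,4]$ with the single zero $\delta^{10}$, and $x^2+x+1$, i.e.\ $(1,1,1,0,0)$, is a nonzero codeword over $\mathbb{F}_2$. The paper's statement omits this hypothesis, though its proof uses it.
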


\begin{proof}
Note that for the special case $r=1$, as we point out in Remark \ref{the same}, the $\mathcal{C}(1,q^{2}+q+1)|_{\mathbb{F}_{q}}$ is just give in Theorem \ref{subcode2}.

When $r=2$, by Theorem \ref{subcode} we know that
\[
\mathcal{C}(1,q^{2}+q+1)|_{\mathbb{F}_{q}} = \left( \mathrm{Tr}_{q^{2}/q}\left( \mathcal{C} (1,q^{2}+q+1)^{\perp} \right) \right)^{\perp}.
\]
In order to obtain the desired result, we will show all the possible zeros of $\mathrm{Tr}_{q^{2}/q}(\mathcal{C}(1,q^{2}+q+1)^{\perp})$ are exactly nonzeros.

By the definition of $\mathcal{C}(1,q^{2}+q+1)$, it was shown earlier that all the nonzeros of $\mathcal{C}(1,q^{2}+q+1)$ are
$
\delta, \delta^{q^{2}}, \delta^{q^{2}+q+1}, \delta^{-q}.
$
 Then by Lemma \ref{generator and check poly}, all zeros of $\mathcal{C}(1,q^{2}+q+1)^{\perp}$ are $\delta^{rn-1}$, $\delta^{rn-q^2}$, $\delta^{rn-q^2-q-1}$ and $\delta^{q}$. According to Lemma \ref{trace exp}, the nonzeros of $\mathcal{C}(1,q^{2}+q+1)^{\perp}$ are the nonzeros of $\mathrm{Tr}_{q^{2}/q}(\mathcal{C}(1,q^{2}+q+1)^{\perp})$. Then the possible zeros of $\mathrm{Tr}_{q^{2}/q}(\mathcal{C}(1,q^{2}+q+1)^{\perp})$ are $\delta^{rn-1}$, $\delta^{rn-q^2}$, $\delta^{rn-q^2-q-1}$ and $\delta^{q}$.
Let the $q$-cyclotomic coset of $s$ modulo $rn$ be denoted by $C_{s}^{(q,rn)}$, then
\[
C_{-1}^{(q,rn)} = \{-1, -q, -q^{2}, q^{2}+q+1\}
\]
and
\[
C_{q}^{(q,rn)} = \{q, q^{2}, -q^2-q-1, 1\}.
\]
If one element of the coset is a zero, then all the elements of the coset must be also zero. However, it can be seen that every possible zero has a conjugate element as nonzero when $r\neq 1$.  Thus we can show that $\delta^{rn-1}$, $\delta^{rn-q^2}$, $\delta^{rn-q^2-q-1}$ and $\delta^{q}$ are nonzeros of $\mathrm{Tr}_{q^{2}/q}(\mathcal{C}(1,q^{2}+q+1)^{\perp})$.

When $r > 2$, by Theorem~\ref{b2017},  $\mathcal{C}(1,q^{2}+q+1)|_{\mathbb{F}_{q}} = \{0\}$ if and only if $|T_{2}| \geq \frac{n}{2}$. Note that $|T_{2}| \ge |T|$, where $|T|=n-4$ and $|T| \ge \frac{n}{2} $ always established when $q >2$.
 This completes the proof.
\end{proof}

\subsection{EAQECCs from constacyclic codes}
Let $[[n,k,d;c]]_{q}$ denote a $q$-ary EAQECC which encodes $k$ logical qubits into $n$ physical qubits by means of $c$ copies of maximally entangled states~\cite{Wilde2008}. It is known that $q$-ary EAQECCs can be obtained through two $\mathbb{F}_{q}$-linear codes $\mathcal{C}_{1}$ and $\mathcal{C}_{2}$.

\begin{lem}{\rm (\cite{LiuChen2025Intersection})}
\label{eaqecc}
Assume that $\mathcal{C}_{1}$ and $\mathcal{C}_{2}$ are two $q$-ary linear codes with parameters $[n,k_{1},d_{1}]$ and $[n,k_{2},d_{2}]$ respectively. Then there exists an $[[n,k_{1}+k_{2}-n+c,\min\{d_{1},d_{2}\};c]]_{q}$ EAQECC, where $c=n-k_{1}-\dim(\mathcal{C}_{1}^{\perp} \bigcap \mathcal{C}_{2})$.
\end{lem}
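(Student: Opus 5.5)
The statement is Lemma~\ref{eaqecc}: the standard construction of EAQECCs from a pair of linear codes. My plan is to reduce it to the Hermitian/symplectic stabilizer formalism of Brun--Devetak--Hsieh.

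First, I would take parity-check matrices $H_1$ (of size $(n-k_1)\times n$) for $\mathcal{C}_1$ and $H_2$ (of size $(n-k_2)\times n$) for $\mathcal{C}_2$ over $\mathbb{F}_q$. From these I would form the CSS-type symplectic check matrix
\[
\begin{pmatrix} H_1 & 0\\ 0 & H_2\end{pmatrix},
\]
whose rows generate a (generally non-abelian) subgroup $S$ of the $n$-qudit Pauli group modulo phases.

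The entanglement-assisted stabilizer theorem (Wilde--Brun) then gives an $[[n,\,n-\mathrm{rank}+c,\,d;\,c]]_q$ code. Here $\mathrm{rank}=2n-k_1-k_2$, and $c$ is half the rank of the symplectic Gram matrix of these rows. For the block form above, that Gram matrix is $\begin{pmatrix}0&H_1H_2^T\\-H_2H_1^T&0\end{pmatrix}$, so $c=\mathrm{rank}(H_1H_2^T)$. This gives dimension $n-(2n-k_1-k_2)+c=k_1+k_2-n+c$.

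The next step is to identify $\mathrm{rank}(H_1H_2^T)$. The rows of $H_1$ span $\mathcal{C}_1^\perp$. Consider the linear map $\mathcal{C}_1^\perp\to\mathbb{F}_q^{n-k_2}$ given by $u\mapsto H_2u^T$. Its kernel is $\mathcal{C}_1^\perp\cap\mathcal{C}_2$, since $\mathcal{C}_2$ is exactly the null space of $H_2$. Hence
\[
\mathrm{rank}(H_1H_2^T)=\dim\mathcal{C}_1^\perp-\dim(\mathcal{C}_1^\perp\cap\mathcal{C}_2)=n-k_1-\dim(\mathcal{C}_1^\perp\cap\mathcal{C}_2),
\]
which matches the stated $c$.

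For the distance, an undetectable error is a Pauli $X(a)Z(b)$ commuting with all of $S$, so $H_1a^T=0$ and $H_2b^T=0$, that is, $a\in\mathcal{C}_1$ and $b\in\mathcal{C}_2$. If it is not in the (isotropic part of the) group, it has weight at least the weight of a nonzero $a$ or $b$. This yields the lower bound $\min\{d_1,d_2\}$.

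The main obstacle is bookkeeping rather than substance: I must verify that $c$ is half the rank of the full symplectic Gram matrix and that the isotropic/symplectic decomposition works over $\mathbb{F}_q$ for prime-power $q$. I would handle this by citing the qudit version of the Wilde--Brun construction (as in \cite{LiuChen2025Intersection}) rather than reproving the symplectic Gram–Schmidt procedure.
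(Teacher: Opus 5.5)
The paper gives no proof of this lemma; it imports it from \cite{LiuChen2025Intersection}. Your argument therefore has to be judged on its own terms, and it is correct. It is the standard derivation. The Wilde--Brun EA-CSS construction gives $[[n,\,k_1+k_2-n+c,\,\min\{d_1,d_2\};\,c]]$ with $c=\mathrm{rank}(H_1H_2^T)$. Your rank--nullity step then converts $c$ into the intersection form in the statement: the image of $u\mapsto H_2u^T$ on $\mathcal{C}_1^\perp$ (the row space of the full-rank $H_1$) is the column space of $H_2H_1^T$, and the kernel is $\mathcal{C}_1^\perp\cap\mathcal{C}_2$.

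There is one slip. With $\langle(a|b),(a'|b')\rangle=a\cdot b'-b\cdot a'$, the operator $X(a)Z(b)$ commutes with the $X$-type rows from $H_1$ iff $H_1b^T=0$. It commutes with the $Z$-type rows from $H_2$ iff $H_2a^T=0$. So $b\in\mathcal{C}_1$ and $a\in\mathcal{C}_2$, not the reverse. Since the bound is symmetric in $d_1,d_2$, nothing breaks. Also, the condition you need is simply that $X(a)Z(b)$ is not the identity, i.e.\ $(a,b)\neq(0,0)$. That alone already forces weight at least $\min\{d_1,d_2\}$.

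Your explicit route buys something the bare citation does not. In general $\min\{d_1,d_2\}$ is only a lower bound on the (impure) distance. It is exact as a pure distance, attained by $X(a)$ or $Z(b)$ with $a\in\mathcal{C}_2$, $b\in\mathcal{C}_1$ of minimum weight. The isotropic subgroup is $S_I\cong(\mathcal{C}_1^\perp\cap\mathcal{C}_2)\times(\mathcal{C}_2^\perp\cap\mathcal{C}_1)$, of dimension $2n-k_1-k_2-2c$. Maximal entanglement, $c=n-k$, is precisely the condition $\dim S_I=0$. Hence for the codes of Theorem~\ref{reult:eaqecc} the isotropic part is trivial, and the distance is exactly $\min\{d_1,d_2\}$ with no purity caveat.

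Finally, the paper applies the lemma over $\mathbb{F}_{q^2}$ with $q=p^m$. You therefore do need the $\mathbb{F}_q$-linear (trace-symplectic) version of the entanglement-assisted stabilizer formalism. Deferring to it is reasonable, but cite a source for that formalism itself rather than the lemma's own source, which would be circular.
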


If $c = n - k$, then it is a maximal entanglement EAQECC. The performance of an EAQECC is measured using the rate $\frac{k}{n}$ and net rate $\frac{k-c}{n}$. The net rate of an EAQECC can be positive, negative, or zero. EAQECCs with positive net rates are used as catalytic codes in quantum computing \cite{Brun2006}. An EAQECC is a standard stabilizer code if $c = 0$.

To determine the parameter $c$, we utilize the following result on the intersection of constacyclic codes found in \cite{Hossain2023}.

\begin{lem}{\rm (\cite{Hossain2023})}
\label{the value of c}
Let $\mathcal{C}_{i} = \langle g_{i} \rangle$ be a $\lambda_{i}$-constacyclic code with parameter $[n, k_{i}]$ over $\mathbb{F}_{q}$, where  $i = 1, 2$, and $\ell = \dim(\mathcal{C}_{1} \cap\mathcal{C}_{2})$. Let $h_{i} = (x^{n} - \lambda_{i})/g_{i}$ be a parity polynomial corresponding to $g_{i}$. If $\lambda_{1} \neq \lambda_{2}$, then
\[
\ell =
\begin{cases}
0, & \text{for } k_{1} + k_{2} \leq n \\
\deg h_{2} - \deg g_{1}, & \text{for } k_{1} + k_{2} > n.
\end{cases}
\]
\end{lem}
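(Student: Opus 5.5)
We identify every vector $\mathbf{c}=(c_0,\ldots,c_{n-1})\in\mathbb{F}_q^n$ with the polynomial $c(x)=\sum_{i=0}^{n-1}c_ix^i$ of degree less than $n$. The point is that this identification does not depend on $\lambda_i$, so $\mathcal{C}_1$ and $\mathcal{C}_2$ can be viewed as subspaces of the single space $\mathbb{F}_q[x]_{<n}$ of polynomials of degree $<n$. The plan is to describe each code as a set of polynomial multiples inside this space, intersect the two descriptions, and count dimensions.

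First I would show that, for $i=1,2$,
\[
\mathcal{C}_i=\{c(x)\in\mathbb{F}_q[x]_{<n} : g_i(x)\mid c(x)\}.
\]
Suppose $c(x)$ represents an element of the ideal $\langle g_i\rangle$ of $\mathbb{F}_q[x]/\langle x^n-\lambda_i\rangle$. Then $c(x)=a(x)g_i(x)+b(x)(x^n-\lambda_i)$ for some polynomials $a(x),b(x)$. Since $g_i\mid x^n-\lambda_i$, it follows that $g_i\mid c$. Conversely, every multiple of $g_i$ of degree $<n$ lies in the ideal.

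Next I would show that $\gcd(g_1,g_2)=1$. This holds because $g_1\mid x^n-\lambda_1$, $g_2\mid x^n-\lambda_2$, and
\[
(x^n-\lambda_1)-(x^n-\lambda_2)=\lambda_2-\lambda_1
\]
is a nonzero constant, since $\lambda_1\neq\lambda_2$. So any common divisor of $x^n-\lambda_1$ and $x^n-\lambda_2$ is a constant, and the same holds for $g_1$ and $g_2$. This is the only place where the hypothesis $\lambda_1\ne\lambda_2$ is used, and it is the key step.

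Combining the two facts gives
\[
\mathcal{C}_1\cap\mathcal{C}_2=\{g_1(x)g_2(x)u(x) : \deg u< n-\deg g_1-\deg g_2\}.
\]
The map $u\mapsto g_1g_2u$ is injective, so this space has dimension $\max\{0,\,n-\deg g_1-\deg g_2\}$.

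Finally I would translate into the notation of the statement. We have $\deg g_i=n-k_i$, so $n-\deg g_1-\deg g_2=k_1+k_2-n$. Also $\deg h_2-\deg g_1=k_2-(n-k_1)=k_1+k_2-n$. Hence $\ell=0$ when $k_1+k_2\le n$, and $\ell=\deg h_2-\deg g_1$ when $k_1+k_2>n$, as claimed.

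I do not expect a serious obstacle. The only point needing care is the first step: ideal membership in the quotient ring must be translated into plain divisibility for the canonical representatives of degree $<n$, and this translation must be the same for both moduli. Once that is in place, the coprimality argument and the dimension count are routine.
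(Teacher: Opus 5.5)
Your proof is correct and complete. The paper gives no proof of this lemma; it quotes it from \cite{Hossain2023}, so there is no in-paper argument to compare against.

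Your argument is the standard self-contained one, and each step holds:
\begin{itemize}
\item each $\mathcal{C}_i$ is exactly the set of multiples of $g_i$ of degree less than $n$, using the $\lambda$-independent identification $\mathbf{c}\mapsto c(x)$;
\item $\gcd(g_1,g_2)=1$, because any common divisor divides $(x^n-\lambda_1)-(x^n-\lambda_2)=\lambda_2-\lambda_1\neq 0$;
\item counting the multiples of $g_1g_2$ of degree less than $n$ then gives $\ell=\max\{0,\,k_1+k_2-n\}$.
\end{itemize}

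The argument gives slightly more than the statement. It shows $\mathcal{C}_1\cap\mathcal{C}_2=\{g_1g_2u:\deg u<k_1+k_2-n\}$ explicitly, and it shows that $\lambda_1\neq\lambda_2$ enters only through this coprimality. Note that Theorem~\ref{reult:eaqecc} only uses the boundary case $k_1+k_2=n$, where $\ell=0$.
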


We now present the main result of this subsection, which gives two families of EAQECCs derived from the constacyclic codes   in Section \ref{main:result}.

Let $\lambda(\mathcal{C})$ denote the shift constant of $\lambda$-constacyclic code $\mathcal{C}$.
Define $$T=\{\mathcal{C} \,|\,\mathcal{C} = \mathcal{C}\left(1,q^{2}+q+1\right) \;\mathrm{or} \; \mathcal{C}\left(1,q^{2}-q+1\right) \;\mathrm{constructed \;in\; Section \; \ref{main:result}} \}$$ and $$T^{\perp}=\{\mathcal{C}=\mathcal{C}_{0}^{\perp}\,|\,\mathcal{C}_{0} \in T\}.$$
\begin{thm}\label{reult:eaqecc}
With the notation above, we have

    {\rm (1)} If $\mathcal{C}_{1},\mathcal{C}_{2} \in T$ and $\lambda(\mathcal{C}_{1})\lambda(\mathcal{C}_{2}) \neq 1$, then there exists an $[[q^2+1, 4, q^2-q; q^2-3]]_{q^2}$ EAQECC, which is a maximal entanglement EAQECC with a negative net rate.

    {\rm (2)} If $\mathcal{C}_{1},\mathcal{C}_{2} \in T^{\perp}$ and $\lambda(\mathcal{C}_{1})\lambda(\mathcal{C}_{2}) \neq 1$, then there exists an $[[q^2+1, q^2-3, 4; 4]]_{q^2}$ EAQECC, which is a maximal entanglement EAQECC with a high positive net rate.

\end{thm}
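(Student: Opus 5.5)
The plan is to apply Lemma \ref{eaqecc} to a pair $\mathcal{C}_1,\mathcal{C}_2$ of codes over $\mathbb{F}_{q^2}$ and compute the entanglement parameter $c=n-k_1-\dim(\mathcal{C}_1^{\perp}\cap\mathcal{C}_2)$ with Lemma \ref{the value of c}. All the needed parameters are already available: Theorems \ref{theorem1} and \ref{theorem2} give every member of $T$ parameters $[q^2+1,4,q^2-q]$ and every member of $T^{\perp}$ parameters $[q^2+1,q^2-3,4]$. Write $n=q^2+1$ and $\lambda_i=\lambda(\mathcal{C}_i)$.

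The key observation concerns the shift constants. By Lemma \ref{generator and check poly}, the dual of a $\lambda_1$-constacyclic code is $\lambda_1^{-1}$-constacyclic, so $\mathcal{C}_1^{\perp}$ is $\lambda_1^{-1}$-constacyclic while $\mathcal{C}_2$ is $\lambda_2$-constacyclic. The hypothesis $\lambda_1\lambda_2\neq 1$ says precisely that $\lambda_1^{-1}\neq\lambda_2$. This puts us in the setting of Lemma \ref{the value of c} with two \emph{different} constants.

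For part (1), take $\mathcal{C}_1,\mathcal{C}_2\in T$. Then $\dim\mathcal{C}_1^{\perp}+\dim\mathcal{C}_2=(q^2-3)+4=n$, which is not larger than $n$, so Lemma \ref{the value of c} gives $\dim(\mathcal{C}_1^{\perp}\cap\mathcal{C}_2)=0$.
\begin{itemize}
\item The entanglement parameter is $c=n-4-0=q^2-3$.
\item The number of logical qudits is $k=k_1+k_2-n+c=4+4-n+q^2-3=4$.
\item The minimum distance is $\min\{d_1,d_2\}=q^2-q$.
\end{itemize}
Since $c=n-k$, the code is maximal entanglement. Its net rate is $(4-(q^2-3))/n<0$ for $q>2$.

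For part (2), take $\mathcal{C}_1,\mathcal{C}_2\in T^{\perp}$. Now $\mathcal{C}_1^{\perp}$ has dimension $4$ and $\mathcal{C}_2$ has dimension $q^2-3$, so again the dimensions sum to exactly $n$ and the intersection is $\{0\}$.
\begin{itemize}
\item The entanglement parameter is $c=n-(q^2-3)=4$.
\item The number of logical qudits is $k=2(q^2-3)-n+4=q^2-3$.
\item The minimum distance is $\min\{d_1,d_2\}=4$.
\end{itemize}
Since $c=4=n-k$, this code is also maximal entanglement. Its net rate is $(q^2-7)/(q^2+1)$, which is positive for $q\ge 3$ and tends to $1$.

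The main point needing care is the constant bookkeeping in the second step. One must confirm that $\mathcal{C}_1^{\perp}$ is $\lambda_1^{-1}$-constacyclic, and that the members of $T^{\perp}$ are themselves constacyclic with constant $\lambda^{-1}$, so that the condition $\lambda_1\lambda_2\neq1$ really corresponds to distinct constants in Lemma \ref{the value of c}. One must also check that the lemma's first case covers the boundary equality $k_1+k_2=n$. Everything else is arithmetic.
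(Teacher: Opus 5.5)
Your proposal is correct and is essentially the paper's proof. You apply Lemma~\ref{eaqecc}, then use $\lambda(\mathcal{C}_1^{\perp})=\lambda(\mathcal{C}_1)^{-1}\neq\lambda(\mathcal{C}_2)$ with the first case of Lemma~\ref{the value of c} (the dimensions sum to exactly $n=q^2+1$) to get $\mathcal{C}_1^{\perp}\cap\mathcal{C}_2=\{0\}$, and read off $c$, $k=k_1+k_2-n+c$ and $\min\{d_1,d_2\}$ exactly as the paper does; your explicit handling of the boundary case $k_1+k_2=n$ and of the sign of the net rate for $q>2$ is just slightly more careful than the paper's write-up.
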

\begin{proof}
   {\rm (1)}  According to Lemma \ref{eaqecc}, there exists an $[[n, k_1 + k_2 - n + c, min\{d_1, d_2\}; c]]_{q^2}$ EAQECC where $\mathcal{C}_{1},\mathcal{C}_{2} \in T$ and $c=n-k_{1}-\dim(\mathcal{C}_{1}^{\perp} \bigcap \mathcal{C}_{2})$.
           Hence, in order to obtain the desired result, we only need to show the exact value of c.
           Because of $\lambda(\mathcal{C}_{1})\lambda(\mathcal{C}_{2}) \neq 1$, we have $\lambda(\mathcal{C}^{\perp}_{1}) \neq \lambda(\mathcal{C}_{2})$. Then by Lemma \ref{the value of c}, we have dim$(\mathcal{C}^{\perp}_{1} \cap \mathcal{C}_{2})=0$,
           which means $c=n-k_{1}=q^2-3$. Thus we can give the parameters of the EAQECC as $[[q^2+1, 4, q^2-q; q^2-3]]_{q^2}$, which means this is a maximal entanglement EAQECC with a negative net rate.

   {\rm (2)}   Note that $\mathcal{C}_{1},\mathcal{C}_{2} \in T^{\perp}$ and $\lambda(\mathcal{C}_{1})\lambda(\mathcal{C}_{2}) \neq 1$. Then with the similar method of (1), we have $c=4$ and the parameters of the EAQECC are $[[q^2+1, q^2-3, 4; 4]]_{q^2}$, which is a maximal entanglement EAQECC. Since the net rate is $\frac{q^2-7}{q^2+1}$, it   approaches  to 1 when $q$ is big enough.

This completes  the proof.
\end{proof}

\begin{remark} To the best of our knowledge, we are the first to offer  EAQECCs which have the parameters as  presented  in Theorem \ref{reult:eaqecc}. The first code of  Theorem \ref{reult:eaqecc} attains maximal entanglement with a high error-correction capability (large $d$), while the second achieves a near-optimal net rate close to 1. Both families are valuable in quantum communication and computation, offering either enhanced error suppression or high resource efficiency.
\end{remark}

\subsection{Locally recoverable codes from constacyclic codes}

The $i$-th code symbol of a linear code $\mathcal{C}$ is said to have locality $r_{l}$, if $x_i$ in each codeword $x \in \mathcal{C}$ can be recovered by $r_{l}$ other code symbols (i.e., $x_i$ is a function of some other $r_{l}$ symbols $x_{i_1}, x_{i_2}, \ldots, x_{i_r}$). A linear code is said to have locality $r_{l}$ if every code symbol has locality at most $r_{l}$. An $(n, k, d, q; r_{l})$-LRC (for short) is a linear code of  length $n$, dimension $k$, minimum distance $d$ and locality $r_{l}$ over $\mathbb{F}_q$ \cite{ref14}.

The following two bounds gave tradeoffs of LRC between the code length, dimension, and the minimum distance which can be characterized by some inequalities.

\begin{lem}{\rm (\cite[Singleton-like bound]{ref5})}\label{lrcbound1}
For an $(n, k, d, q; r_{l})$-LRC, we have
\begin{equation}\label{bound1}
d \leq n - k - \left\lceil \frac{k}{r_{l}} \right\rceil + 2.
\end{equation}
\end{lem}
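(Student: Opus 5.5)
We plan to prove the bound by the standard rank-counting argument. The idea is to build a large coordinate set $I \subseteq \{0,1,\ldots,n-1\}$ on which some nonzero codeword vanishes; its weight, and hence $d$, is then at most $n-|I|$.

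Fix a generator matrix $G$ of the $(n,k,d,q;r_l)$-LRC $\mathcal{C}$, with columns $g_0,\ldots,g_{n-1}\in\mathbb{F}_q^k$. For a set $I$ write $G_I$ for the submatrix formed by the columns indexed by $I$. Locality means that for every $i$ there is a set $R_i$ with $i\notin R_i$, $|R_i|\le r_l$ and $g_i\in\mathrm{span}\{g_j: j\in R_i\}$.

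The key observation is this: if $\mathrm{rank}(G_I)\le k-1$, then some nonzero $u\in\mathbb{F}_q^k$ satisfies $uG_I=0$. The codeword $uG$ is nonzero because $G$ has full rank $k$, and it vanishes on $I$. Therefore $d\le n-|I|$. It then suffices to construct $I$ with $\mathrm{rank}(G_I)\le k-1$ and $|I|\ge k-1+\lceil k/r_l\rceil-1$, which gives exactly the claimed inequality.

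\emph{Greedy phase.} Start from $I=\emptyset$ and repeat the following while $\mathrm{rank}(G_I)\le k-1-r_l$.
\begin{itemize}
\item Since $\mathrm{rank}(G)=k$, choose an index $i$ with $g_i\notin\mathrm{span}(G_I)$; in particular $i\notin I$.
\item Replace $I$ by $I\cup R_i\cup\{i\}$.
\end{itemize}
Each such step has three properties:
\begin{itemize}
\item The rank increases by at most $|R_i\setminus I|\le r_l$, because $g_i$ lies in the span of the columns indexed by $R_i$. So the rank stays at most $k-1$.
\item The size of $I$ increases by at least (rank increase)$\,+1$, since the new index $i$ adds no rank beyond $R_i$.
\item The rank increases by at least one, since $g_i$ itself was outside $\mathrm{span}(G_I)$.
\end{itemize}
The loop therefore terminates with $\mathrm{rank}(G_I)\ge k-r_l$. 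After $t$ steps we have $|I|\ge \mathrm{rank}(G_I)+t$, and $t\ge\lceil (k-r_l)/r_l\rceil=\lceil k/r_l\rceil-1$, because each step raises the rank by at most $r_l$.

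\emph{Padding phase.} Next, add single indices whose columns lie outside the current span, one at a time, until $\mathrm{rank}(G_I)=k-1$. Each addition raises size and rank by exactly one, so the invariant $|I|\ge\mathrm{rank}(G_I)+t$ is preserved. This yields $|I|\ge k-1+\lceil k/r_l\rceil-1$, and therefore
\[
d\le n-|I|\le n-k-\left\lceil \frac{k}{r_l}\right\rceil+2 .
\]

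The main care point is the bookkeeping of the invariant $|I|-\mathrm{rank}(G_I)\ge t$, in particular that each greedy step gains at least one "free" coordinate. I will also treat the edge case $k\le r_l$ separately: there the greedy phase is empty, $\lceil k/r_l\rceil=1$, and the bound reduces to the ordinary Singleton bound, which the padding phase alone proves.
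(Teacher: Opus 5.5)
Your proof is correct. The paper gives no proof of this lemma; it only quotes the Singleton-like bound from the literature (the bound of Gopalan, Huang, Simitci and Yekhanin). Your argument is essentially their standard rank-counting proof. Your conservative stopping rule $\mathrm{rank}(G_I)\ge k-r_l$, followed by column-by-column padding, takes the place of their final step, which adds only a subset of $R_i\cup\{i\}$ to reach rank $k-1$. Your invariant $|I|-\mathrm{rank}(G_I)\ge t$ together with $t\ge\lceil k/r_l\rceil-1$ is exactly the right bookkeeping.

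One step you state without comment is that, for a linear code, ``$x_i$ is a function of $x_{R_i}$'' is equivalent to $g_i\in\mathrm{span}\{g_j : j\in R_i\}$. This follows in one line. Any $u$ with $ug_j=0$ for all $j\in R_i$ gives a codeword $uG$ that agrees with the zero codeword on $R_i$, so $ug_i=0$. Hence $g_i$ lies in the double annihilator of that span, which is the span itself.
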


\begin{lem}{\rm (\cite[Cadambe-Mazumdar bound]{ref1})}\label{lrcbound2}
For an $(n, k, d, q; r_{l})$-LRC, we have
\begin{equation}\label{bound2}
k \leq \min_{t \in \mathbb{Z}_+} \left\{ tr + k^{(q)}_{\mathrm{opt}}(n - t(r_{l} + 1), d) \right\},
\end{equation}
where $k^{(q)}_{\mathrm{opt}}(n, d)$ denotes the largest possible dimension of a linear code with code length $n$, minimum distance $d$ and alphabet size $q$, where $\mathbb{Z}_+$ is the set of all nonnegative integers.
\end{lem}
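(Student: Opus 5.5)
The plan is to prove the bound by fixing an arbitrary $t\in\mathbb{Z}_+$, constructing a coordinate set $S$ with $|S|=t(r_l+1)$ whose generator-matrix columns span a space of dimension at most $tr_l$, and then shortening on $S$. (In (\ref{bound2}) the symbol $r$ is read as the locality $r_l$.)

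\textbf{Setup.} Let $G$ be a $k\times n$ generator matrix of the LRC $\mathcal{C}$, with columns $g_1,\dots,g_n$. Locality $r_l$ means that every coordinate $i$ has a repair set $R_i$, with $|R_i|\le r_l$ and $i\notin R_i$, such that $g_i\in\mathrm{span}\{g_j: j\in R_i\}$. For a set $S$ of coordinates write $\mathrm{rk}(S)=\dim\mathrm{span}\{g_j:j\in S\}$. We may assume $t(r_l+1)\le n$, since otherwise the term is infinite or undefined and imposes nothing.

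\textbf{Building $S$.} Start from $S_0=\emptyset$. At step $j\le t$, if $\mathrm{rk}(S_{j-1})<k$, choose a coordinate $i$ with $g_i\notin\mathrm{span}(S_{j-1})$. Such an $i$ exists because the columns of $G$ span $\mathbb{F}_q^k$. Put $S_j'=S_{j-1}\cup\{i\}\cup R_i$. The number of new coordinates is $c\le r_l+1$, and $i$ is among them. Since $g_i$ lies in the span of the columns indexed by $R_i$, the rank increases by at most $c-1\le r_l$. Next pad $S_j'$ with $r_l+1-c$ arbitrary further coordinates to obtain $S_j$. Each added column raises the rank by at most one, so $|S_j|=|S_{j-1}|+r_l+1$ and $\mathrm{rk}(S_j)\le \mathrm{rk}(S_{j-1})+r_l$. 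If at some step $\mathrm{rk}(S_{j-1})=k$ already holds, then $k\le (j-1)r_l\le tr_l$ and the inequality for this $t$ is immediate. Otherwise, after $t$ steps we have $S=S_t$ with $|S|=t(r_l+1)$ and $\mathrm{rk}(S)\le tr_l$.

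\textbf{Shortening.} Consider the subcode $\mathcal{C}_S=\{c\in\mathcal{C}: c_j=0 \text{ for all } j\in S\}$. It is the kernel of the map $u\mapsto (uG)|_S$, whose image has dimension $\mathrm{rk}(S)$. Hence $\dim\mathcal{C}_S\ge k-tr_l$. Deleting the coordinates in $S$ is injective on $\mathcal{C}_S$ and preserves weights. It therefore yields a linear code of length $n-t(r_l+1)$, dimension at least $k-tr_l$ and minimum distance at least $d$. Since a code with minimum distance at least $d$ has a subcode of the same dimension and minimum distance exactly $d$ (subject to the usual convention on $k^{(q)}_{\mathrm{opt}}$), we get $k-tr_l\le k^{(q)}_{\mathrm{opt}}(n-t(r_l+1),d)$. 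Minimizing over $t$ gives (\ref{bound2}).

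\textbf{Main obstacle.} The delicate point is making $|S|$ exactly $t(r_l+1)$ rather than at most that. The function $k^{(q)}_{\mathrm{opt}}$ is nondecreasing in the length, so a smaller $|S|$ would push the inequality in the wrong direction. The padding step handles this, and the key accounting is that the forced coordinate $i$ contributes no rank of its own, so the rank still rises by at most $r_l$ per step.
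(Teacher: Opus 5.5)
The paper gives no proof of this lemma; it is quoted from Cadambe--Mazumdar \cite{ref1}. Your argument is correct, and reading $tr$ as $tr_l$ is right.

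Your route is essentially the linear specialization of theirs. They build a set of $t(r_l+1)$ coordinates greedily from repair groups, as you do, but bound its entropy by $tr_l$. This makes their result hold for nonlinear codes with $k=\log_q|\mathcal{C}|$. They then use pigeonhole to find $q^{k-tr_l}$ codewords agreeing on that set. Your rank count plus shortening instead produces a \emph{linear} code directly, which is exactly what the paper's linear-code definition of $k^{(q)}_{\mathrm{opt}}$ requires.

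One justification needs repair. A subcode of the same dimension is the code itself, so a code with minimum distance at least $d$ need not have such a subcode with distance exactly $d$. The correct fix goes as follows. Let $\mathcal{D}$ be an $[n',k',d']$ code with $d'\ge 2$. Puncture it at a coordinate in the support of a minimum-weight codeword; this is injective since $d'\ge 2$. Then append a zero coordinate to obtain an $[n',k',d'-1]$ code. Iterating shows that ``distance at least $d$'' and ``distance exactly $d$'' give the same value of $k^{(q)}_{\mathrm{opt}}(n',d)$.

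With that fix, together with the convention $k^{(q)}_{\mathrm{opt}}(n',d)=0$ for $n'<d$, the proof is complete. Your argument respects that convention, since the shortened code is then forced to be zero.
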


For any given $(n, k, d, q; r_{l})$-LRC, we say the code is dimension-optimal if the parameters of the code meet the Cadambe-Mazumdar bound in \eqref{bound2}. Similarly, an $(n, k, d, q; r_{l})$-LRC is said to be distance-optimal if the distance $d$ achieves the Singleton-like bound in \eqref{bound1}.

The following is a simple result on the locality of constacyclic codes found in \cite{zhao2020} , and one can find a more generalized proof in \cite{ref15}.
\begin{lem}{\rm (\cite{zhao2020})}\label{LRC}
Let $\mathcal{C}$ be a constacyclic code with dual distance $d^{\perp} > 2$. Then
the minimum locality of $\mathcal{C}$ is $d^{\perp} - 1$.
\end{lem}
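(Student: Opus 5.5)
The goal is to show that for a constacyclic code $\mathcal{C}$ with dual distance $d^\perp>2$, the minimum locality of $\mathcal{C}$ is exactly $d^\perp-1$. I will prove this through the usual correspondence between local repair groups and supports of dual codewords.

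\textbf{The dual-codeword criterion.} First I make precise the standard fact about linear codes. Coordinate $i$ of $\mathcal{C}$ can be recovered from the coordinates in a set $R\not\ni i$ if and only if there is a codeword $\mathbf{u}\in\mathcal{C}^\perp$ with $i\in\mathrm{supp}(\mathbf{u})\subseteq R\cup\{i\}$.

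For the direction ``if'', take $\sum_j u_jc_j=0$ and solve for $c_i$, which gives $c_i=-u_i^{-1}\sum_{j\in R}u_jc_j$.

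For the direction ``only if'', let $G$ be a generator matrix of $\mathcal{C}$. Recoverability of $c_i$ from $c_R$ means that no codeword has $c_R=0$ and $c_i\neq 0$. Equivalently, column $i$ of $G$ lies in the span of the columns indexed by $R$. A linear dependency expressing column $i$ in terms of those columns is exactly a dual codeword $\mathbf{u}$ with $i\in\mathrm{supp}(\mathbf{u})\subseteq R\cup\{i\}$.

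Consequently, the locality of coordinate $i$ equals $\min\{\mathrm{wt}(\mathbf{u})-1:\mathbf{u}\in\mathcal{C}^\perp,\ u_i\neq 0\}$.

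\textbf{Lower bound.} Every nonzero dual codeword has weight at least $d^\perp$. So every coordinate has locality at least $d^\perp-1$, and hence the locality of $\mathcal{C}$ is at least $d^\perp-1$.

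\textbf{Upper bound.} Here I use the constacyclic structure. Pick a dual codeword $\mathbf{u}$ of weight $d^\perp$. By Lemma \ref{generator and check poly}, $\mathcal{C}^\perp$ is $\lambda^{-1}$-constacyclic. Applying the constacyclic shift $s$ times moves $\mathrm{supp}(\mathbf{u})$ cyclically by $s$ positions. The shift only multiplies the wrapped-around entries by the nonzero scalar $\lambda^{-1}$, so it does not change the weight. Since $\mathbf{u}\neq 0$, its support contains some coordinate $j$. Shifting by $i-j \pmod n$ therefore produces a weight-$d^\perp$ dual codeword whose support contains $i$, for every coordinate $i$. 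By the criterion above, every coordinate has locality at most $d^\perp-1$.

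\textbf{Role of $d^\perp>2$ and the main obstacle.} The hypothesis $d^\perp>2$ is used only to guarantee that the resulting locality $d^\perp-1\ge 2$ is nontrivial. In particular it rules out $d^\perp=1$, where some coordinate is identically zero. The only step needing care is the ``only if'' direction of the criterion, that is, the reduction of recoverability to column span. It is routine linear algebra, but it must be stated cleanly so that the minimality of $d^\perp-1$ is justified, rather than only its achievability.
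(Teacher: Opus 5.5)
Your proof is correct and complete. The paper does not prove this lemma; it quotes it from the literature and refers elsewhere for a more general proof. Your argument is the standard one behind such results. For an arbitrary linear code, the locality of coordinate $i$ is $\min\{\mathrm{wt}(\mathbf{u})-1:\mathbf{u}\in\mathcal{C}^\perp,\ u_i\neq 0\}$. So the minimum locality equals $d^\perp-1$ whenever the supports of the minimum-weight codewords of $\mathcal{C}^\perp$ cover every coordinate, for instance when they form a $1$-design. Your constacyclic-shift step is exactly the check of this covering condition. Your route gives a self-contained proof for constacyclic codes, while the general covering criterion also applies to codes with no shift structure.

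One small correction to your last paragraph. Your argument never uses $d^\perp>2$: it gives locality $d^\perp-1$ for $d^\perp=1$ and $d^\perp=2$ as well, provided $\mathcal{C}^\perp\neq\{0\}$ so that $d^\perp$ is defined. Also, for a constacyclic code, $d^\perp=1$ does not just make one coordinate identically zero. By your own shift argument every coordinate then vanishes, so $\mathcal{C}=\{0\}$.
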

We now show that the duals of our constacyclic codes yield optimal LRCs.
\begin{thm}\label{LRC thm}
Let $\mathcal{C}$ be the constacyclic code with the form $\mathcal{C}(1,q^2+q+1)$ or $\mathcal{C}(1,q^2-q+1)$ over $\mathbb{F}_{q^2}$. Then the dual code $\mathcal{C}^{\perp}$ is a distance-optimal and dimension-optimal LRC.
\end{thm}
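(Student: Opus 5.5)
The plan is to read off all LRC parameters of $\mathcal{C}^{\perp}$ from Theorems \ref{theorem1} and \ref{theorem2}, get the locality from Lemma \ref{LRC}, and then check that both bounds are met with equality. Throughout, $q>2$. By those theorems, in either case $\mathcal{C}^{\perp}$ is a $\lambda^{-1}$-constacyclic code over $\mathbb{F}_{q^2}$ with parameters $[n,k,d]=[q^2+1,q^2-3,4]$.

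\emph{Locality.} The dual of $\mathcal{C}^{\perp}$ is $\mathcal{C}$, whose minimum distance is $q^2-q$. Since $q^2-q>2$ for $q>2$, Lemma \ref{LRC} applies to $\mathcal{C}^{\perp}$ and gives minimum locality $r_l=q^2-q-1$. So $\mathcal{C}^{\perp}$ is a $(q^2+1,\,q^2-3,\,4,\,q^2;\,q^2-q-1)$-LRC.

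\emph{Distance-optimality.} We evaluate the right-hand side of \eqref{bound1}. We have
\[
\frac{k}{r_l}=\frac{q^2-3}{q^2-q-1}=1+\frac{q-2}{q^2-q-1}.
\]
For $q>2$ one has $0<q-2\le q^2-q-1$, so $\lceil k/r_l\rceil=2$. The bound therefore equals $n-k-2+2=4=d$, and $\mathcal{C}^{\perp}$ is distance-optimal.

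\emph{Dimension-optimality.} This step needs a little care, since it requires knowing $k^{(q^2)}_{\mathrm{opt}}$ exactly for some length.
\begin{itemize}
\item Because $\mathcal{C}^{\perp}$ is an LRC with these parameters, Lemma \ref{lrcbound2} already gives $k\le \min_t\{\cdots\}$. It therefore suffices to exhibit one value of $t$ whose term equals $k$.
\item Take $t=1$. The residual length is $n-(r_l+1)=q^2+1-(q^2-q)=q+1$.
\item The Singleton bound gives $k^{(q^2)}_{\mathrm{opt}}(q+1,4)\le q-2$.
\item A (doubly) extended Reed--Solomon code over $\mathbb{F}_{q^2}$ of length $q+1\le q^2+1$ and dimension $q-2$ is MDS with distance $4$. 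Hence $k^{(q^2)}_{\mathrm{opt}}(q+1,4)=q-2$.
\end{itemize}
The $t=1$ term is then $(q^2-q-1)+(q-2)=q^2-3=k$. So the minimum in \eqref{bound2} is attained and equals $k$, which means $\mathcal{C}^{\perp}$ is dimension-optimal.

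The only potential obstacle is pinning down $k_{\mathrm{opt}}$ at length $q+1$, and the existence of MDS codes of that length over $\mathbb{F}_{q^2}$ settles it.
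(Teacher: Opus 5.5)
Your proposal is correct and follows the same route as the paper: locality $r_l=q^2-q-1$ from Lemma~\ref{LRC}, the Singleton-like bound evaluating to $4=d^{\perp}$, and the $t=1$ term of the Cadambe--Mazumdar bound equalling $(q^2-q-1)+(q-2)=q^2-3=k$. Your version is a little tighter than the paper's, which claims only $t=1$ is admissible (overlooking $t=0$), because you use $k\le\min_t\{\cdots\}$ to cover every $t$ at once; the Reed--Solomon step is even optional, since $k\le\min_t\{\cdots\}\le(q^2-q-1)+k^{(q^2)}_{\mathrm{opt}}(q+1,4)\le q^2-3=k$ already forces equality.
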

\begin{proof}
  By Lemma \ref{LRC}, the locality $r_{l}$ of $\mathcal{C}^{\perp}$ is $q^2-q-1$.
  According to Lemma \ref{lrcbound1}, one has that the Singleton-like bound of $\mathcal{C}^{\perp}$ is
  $$(q^2+1)-(q^2-3)-\left\lceil\frac{q^2-3}{q^2-q-1}\right\rceil+2=4,$$
  which is equal to the minimum distance $d^{\perp}=4$ of $\mathcal{C}^{\perp}$. This means that $\mathcal{C}^{\perp}$ is distance-optimal.
  Since the length of a code must be possitive, $t$ in Cadambe-Mazumdar bound can only be 1, then $k^{(q)}_{\mathrm{opt}}(n-t(r+1), d)$ is $q-2$ by the singleton bound.
  Then according to Lemma \ref{lrcbound2}, we have that Cadambe-Mazumdar bound of $\mathcal{C}^{\perp}$ is
  $$(q^2-q-1)+(q-2)=q^2-3,$$
   which induces that $\mathcal{C}^{\perp}$ is also dimension-optimal.
\end{proof}

\section{Conclusions}

This paper has explored the properties of two infinite families of $\lambda$-constacyclic codes over $\mathbb{F}_{q^2}$. The main contributions of our investigation can be summarized as follows.

(1) We have established two distinct families of $\lambda$-constacyclic codes of length $n = q^2 + 1$, characterized by specific conditions on the order of $\lambda$:

   {\rm (i)} Codes $\mathcal{C}(1,q^2+q+1)$ with $r \mid (q+1)$ and $\nu_2(r) = \nu_2(q+1)$.

   {\rm (ii)} Codes $\mathcal{C}(1,q^2-q+1)$ with $r \mid (q-1)$ and $\nu_2(r) = \nu_2(q-1)$.

\noindent These codes have parameters $[q^2+1, 4, q^2-q]$ and support $3$-$(q^2+1, q^2-q, (q^2-q-1)(q-2))$ designs. Their dual codes, with parameters $[q^2+1, q^2-3, 4]$, similarly support $3$-$(q^2+1, 4, q-2)$ designs.

(2) The intersection properties of these codes have enabled the construction of several families of entanglement-assisted quantum error-correcting codes. Additionally, the dual codes exhibited optimal characteristics of locally recoverable codes, which meets the Singleton-like bound and Cadambe-Mazumdar bound.

(3) We provided an alternative proof of the solutions on a class of  equations compared to \cite{wang2023infinite} and gave  a positive answer to the conjecture proposed by \cite{conj}. In additions, our results totally  extended the results on negacyclic codes  presented in  \cite{wang2023infinite}.

The results suggest several interesting directions for future research, including further investigation of constacyclic codes supporting higher-order designs, exploring their applications in coding theory and other fields.

\section*{Acknowledgment}
This work was supported by  the National Natural Science Foundation of China (Nos. 12501738, 12201356, 62032009, 12471492),   the National Key Research and Development Program of China (No. 2021YFA1000600), the Innovation Group Project of the Natural Science Foundation of Hubei Province of China (No. 2023AFA021).

\end{document}